\newcolumntype{x}[1]{>{\centering\arraybackslash}p{#1}}
\newtheorem{thm}{Theorem}
\newtheorem*{thm*}{Theorem}
\newtheorem{prop}[thm]{Proposition}
\newtheorem*{prop*}{Proposition}
\newtheorem{lemma}[thm]{Lemma}
\newtheorem*{lemma*}{Lemma}
\newtheorem{cor}[thm]{Corollary}
\newtheorem*{cor*}{Corollary}
\newtheorem*{cj*}{Conjecture}
\newtheorem{Def}[thm]{Definition}
\newtheorem*{Def*}{Definition}
\theoremstyle{definition}
\newcommand{\bb}{\begin{equation}}
\newcommand{\bbb}{\begin{equation*}}
\newcommand{\ee}{\end{equation}}
\newcommand{\eee}{\end{equation*}}
\newcommand{\ba}{\begin{eqnarray}}
\newcommand{\ea}{\end{eqnarray}}
\newcommand*{\coloneqq}{\mathrel{\vcenter{\baselineskip0.5ex \lineskiplimit0pt \hbox{\scriptsize.}\hbox{\scriptsize.}}} =}
\newcommand{\texteq}[1]{\stackrel{\mathclap{\scriptsize \mbox{#1}}}{=}}
\newcommand{\textleq}[1]{\stackrel{\mathclap{\scriptsize \mbox{#1}}}{\leq}}
\newcommand{\textgeq}[1]{\stackrel{\mathclap{\scriptsize \mbox{#1}}}{\geq}}
\newcommand{\ketbra}[1]{\ket{#1}\!\!\bra{#1}}
\newcommand{\G}{\mathrm{\scriptscriptstyle G}}
\newcommand{\id}{\mathds{1}}
\newcommand{\R}{\mathds{R}}
\newcommand{\mF}{\mathcal{F}}
\newcommand{\mG}{\mathcal{G}}
\newcommand{\mV}{\mathcal{V}}
\newcommand{\blam}{\boldsymbol{\lambda}}
\DeclareMathOperator{\Tr}{Tr}
\DeclareMathAlphabet{\pazocal}{OMS}{zplm}{m}{n}
\newcommand*\rel@kern[1]{\kern#1\dimexpr\macc@kerna}
\newcommand*\widebar[1]{%
  \begingroup
  \def\mathaccent##1##2{%
    \rel@kern{0.8}%
    \overline{\rel@kern{-0.8}\macc@nucleus\rel@kern{0.2}}%
    \rel@kern{-0.2}%
  }%
  \macc@depth\@ne
  \let\math@bgroup\@empty \let\math@egroup\macc@set@skewchar
  \mathsurround\z@ \frozen@everymath{\mathgroup\macc@group\relax}%
  \macc@set@skewchar\relax
  \let\mathaccentV\macc@nested@a
  \macc@nested@a\relax111{#1}%
  \endgroup
}
\definecolor{darkblue}{rgb}{0,0,0.5}
\begin{document}
\title{Assisted concentration of Gaussian resources}

\begin{abstract}
In spite of their outstanding experimental relevance, Gaussian operations in continuous-variable quantum systems are subjected to fundamental limitations, as it is known that general resources cannot be distilled within the Gaussian paradigm. We show that these limitations can be overcome by considering a collaborative setting where one party increases the amount of local resource with the aid of another party, whose operations are assumed to be Gaussian but are otherwise unrestricted; the two parties can only communicate classically. We show that in single-shot scenarios, unlike in the well-known case of entanglement theory, two-way classical communication does not lead to any improvement over one-way classical communication from the aiding party to the aided party. We then provide a concise general expression for the Gaussian resource of assistance, i.e., the maximum amount of resource that can be obtained when the aiding party holds a purification of the aided party's state, as measured by a general monotone. To demonstrate its usefulness, we apply our result to two important kinds of resources, squeezing and entanglement, and find some simple analytic solutions. In the case of entanglement theory, we are able to find general upper bounds on the regularized Gaussian entanglement of assistance, and to establish additivity for tensor powers of thermal states. This allows us to draw a quantitative and enlightening comparison with the performance of assisted entanglement distillation in the non-Gaussian setting. On the technical side, we develop some variational expressions to handle functions of symplectic eigenvalues that may be of independent interest. Our results suggest further potential for Gaussian operations to play a major role in practical quantum information processing protocols.
\end{abstract}

\author{Ludovico Lami}
\email{ludovico.lami@gmail.com}
\affiliation{School of Mathematical Sciences and Centre for the Mathematics and Theoretical Physics of Quantum Non-Equilibrium Systems, University of Nottingham, University Park, Nottingham NG7 2RD, United Kingdom}
\author{Ryuji Takagi}
\email{rtakagi@mit.edu}
\affiliation{Center for Theoretical Physics and Department of Physics, Massachusetts Institute of Technology, Cambridge, Massachusetts 02139, USA}
\author{Gerardo Adesso}
\email{gerardo.adesso@nottingham.ac.uk}
\affiliation{School of Mathematical Sciences and Centre for the Mathematics and Theoretical Physics of Quantum Non-Equilibrium Systems, University of Nottingham, University Park, Nottingham NG7 2RD, United Kingdom}
\date{\today}
\maketitle

\section{Introduction}

In recent decades, quantum optics has become one of the most prominent platforms for the implementation of quantum technologies and the realization of quantum communication experiments~\cite{KLM, Braunstein-review, biblioparis, adesso14, CERF, BUCCO}. On the other hand, exploiting quantum effects and correlations such as entanglement over long distances requires taming the omnipresent noise that would otherwise drive a system's behavior toward classical physics. With the goal of addressing near-term practical goals, it is of paramount importance to identify what we can and cannot do with limited quantum resources. Fortunately, the recently established framework of \emph{quantum resource theories}~\cite{Brandao-Gour, G-resource-theories, Genoni2008quantifying, Takagi2018convex,Albarelli2018wigner,RT-review} allows us to deal with questions of this kind in a systematic way.

In continuous variable quantum systems, Gaussian states and Gaussian operations have proven relatively easy to generate, control, and manipulate, and are therefore ideal candidates for the above theoretical program~\cite{biblioparis, adesso14}. We can therefore consider the special class of \emph{Gaussian quantum resource theories}, whose free states and operations are required to be all Gaussian. Such a concept has been formalized in~\cite{G-resource-theories}. Unfortunately, it has been known for a long time that many fundamental protocols are impossible to realize within this Gaussian setting: these include universal fault-tolerant quantum computation~\cite{KLM}, entanglement distillation~\cite{nogo1, nogo2, nogo3}, and error-correction~\cite{Niset2009} (although some experimental progress has been reported for resource distillation and error-correction under non-Gaussian noises, see, e.g., Refs.~\cite{Heersink2006distillation, Hage2008preparation, Dong2008experimental, Lassen2010ec}).
In fact, the impossibility of certain state transformations turns out to be a general feature of Gaussian quantum resource theories~\cite{G-resource-theories}.

In spite of this lack of universality, in this paper we point out that the Gaussian framework still suffices to accomplish somehow ``easier'' tasks that are however of wide interest in quantum information. Namely, we focus on the assisted paradigm for resource distillation, previously introduced in the context of entanglement theory~\cite{Cohen1998, DiVincenzo1999, Smolin2005, Gour2006} and more recently for quantum coherence~\cite{Chitambar-assisted, bartosz-myself-alex} and quantum thermodynamics~\cite{ben-myself-gerry}. This scenario features two parties, conventionally named Alice and Bob, who hold a bipartite state $\rho_{AB}$ and whose goal is to produce a target state $\sigma_A$ on Alice's system. The difficulty lies in the fact that Alice is restricted to free operations, and $\sigma_A$ will generally contain more resource than the initial reduced state $\rho_A$, thus being out of reach if Alice were to operate alone. However, she can rely on Bob's remote help. Bob is usually assumed to have access to more sophisticated equipment than Alice, meaning that he is not restricted to free operations; however, he can communicate with Alice only classically.
Here the analysis bifurcates, as there are two distinct cases of interest, depending on whether Bob can only send messages to Alice or also Alice-to-Bob communication is possible while the protocol is still running. The first case is usually referred to as the \emph{one-way assisted setting}, while the second is called \emph{collaborative setting}. In all the examples above, it is known that already one-way assistance enables more efficient distillation of most bipartite states~\cite{Cohen1998, DiVincenzo1999, Smolin2005, Gour2006, Chitambar-assisted, bartosz-myself-alex, ben-myself-gerry}. Furthermore, the example of entanglement~\cite{Gour2006} shows that two-way communication can help to distill even more resource. Assisted scenarios for remote preparation and distillation of quantum resources have recently been explored experimentally, e.g., with photonics and superconducting microwave cavity setups~\cite{Wu:17,mivcuda2017experimental,pogorzalek2019secure}.

In this paper we analyze the problem of assisted state transformation in Gaussian quantum resource theories, assuming that Bob is also applying Gaussian operations, though not necessarily free as is the case for Alice. As the term ``distillation'' usually refers to the preparation of ``golden units'' of resource,\footnote{A golden unit is a state from which any other state can be prepared by means of free operations alone. a process that is generally impossible within the Gaussian framework~\cite{G-resource-theories}, the goal of our protocols is typically that of increasing the amount of resource, as measured by some chosen monotone. We prefer to refer to this task as ``resource concentration''.} We introduce in this setting the regularized Gaussian measures of assistance, that quantify assisted resource concentration when Alice and Bob have an asymptotic number of copies of a given state available, and can manipulate all of them at the same time in a Gaussian but coherent fashion.

We start by proving that one-way communication suffices for the most general collaborative protocol in this simplified Gaussian setting, thus reconciling the aforementioned two conceptual branches. Next, we tackle two case studies. First, we look at the simplest Gaussian resource theory, i.e., that of squeezing. For any given multimode Gaussian state on Alice's system we are able to explicitly compute the \emph{maximal Gaussian squeezing of assistance}, i.e., the maximal squeezing that can be induced via assistance by an all-Gaussian Bob, assuming that he holds a purification of that state. The squeezing measure we adopt here is directly related to the minimal quadrature variance along all possible directions in phase space. Here we demonstrate by a simple example the irreversibility of Bob's action: namely, the post-measurement states that Alice obtains are in general inequivalent from the point of view of interconvertibility via free operations.

We then move on to our second case study, the resource theory of Gaussian entanglement. When assistance is taken into account, this setting features three parties: the two sharing the entanglement, Alice and Bob, and the assisting party, now named Charlie. When Charlie is restricted to single-mode Gaussian measurements, the maximal entanglement he can induce between Alice and Bob has previously been studied under the name of \emph{Gaussian localizable entanglement}~\cite{Verstraete2004, Popp2005, Fiurasek2007, Mista2008}. We instead look at the scenario where he is allowed to apply any global Gaussian measurement, and give analytical formulae to compute the maximum \emph{Gaussian entanglement of assistance} thus obtainable between Alice and Bob in the following two theoretically and experimentally relevant cases. (1) When Alice and Bob hold a mode each and their reduced state is a product state, which is the setting relevant for entanglement swapping, we confirm that the previously considered ideal Bell measurements on Charlie's two modes~\cite{Pirandola2006, optimal-G-ent-swapping} are indeed the optimal strategy. This result, to the best of our knowledge unknown before, provides rigorous foundation for the most commonly studied procedures of entanglement swapping in the Gaussian setting. (2) When Alice and Bob's reduced state is in the class of Gaussian least entangled mixed states (GLEMS)~\cite{Adesso2004glemsPRL,Adesso2004glems}. This is an important class of states providing a rigorous lower bound on entanglement for given global and local purities, which can be measured without a costly full tomography~\cite{Ekert2002purity,Filip2002purity}.  
It is also a relevant class for a simple physical situation where Alice, Bob, and Charlie hold three-mode pure state, Charlie gets separated apart from Alice and Bob, and Charlie assists Alice and Bob to gain their entanglement later.  
Finally, using some innovative techniques we derive a general upper bound on the Gaussian entanglement of assistance that is valid for all bipartite Gaussian states and for a wide class of entanglement measures. This latter result allows us to compute the regularized Gaussian entanglement of assistance of a tensor product of identical Gaussian thermal states, and to compare it with its non-Gaussian counterpart~\cite{Smolin2005}. The comparison demonstrates that general non-Gaussian protocols are more efficient than Gaussian ones; the gap turns out to be comparatively large when the local entropies are small, and to reduce to a constant in the opposite limit. To the extent of our knowledge, this is the first instance of a precise quantitative comparison between the efficiency of a nontrivial protocol in the standard and in the Gaussian settings.

The rest of the paper is structured as follows: in Section~\ref{sec preliminaries} we recall the definition of a Gaussian resource theory, and introduce the concept of assisted concentration of resources. In Section~\ref{sec G assisted distillation} we present our first general results on the assisted Gaussian framework. Section~\ref{sec squeezing} deals with our first case study, the resource theory of squeezing. Section~\ref{sec entanglement} is instead devoted to entanglement theory. In Section~\ref{sec non-Gaussian} we address the limitations of the Gaussian setting by considering assistance via non-Gaussian protocols. Finally, in Section~\ref{sec conclusions} we discuss our results, draw some conclusions and point out directions of future research.

\section{Preliminaries} \label{sec preliminaries}

\subsection{Gaussian quantum resource theories} \label{subsec G resource theories}

We start by recalling the basic theory of Gaussian quantum states~\cite{weedbrook12, adesso14, BUCCO}. Consider a system of $n$ harmonic oscillators with canonical operators $x_j,p_k$, customarily arranged as a (column) vector $R\coloneqq (x_1,\ldots, x_n, p_1, \ldots, p_n)^\intercal$. The canonical commutation relations can be written in compact form as $[R,R^\intercal] = i \Omega$, where $\Omega\coloneqq \left( \begin{smallmatrix} 0 & \id \\ - \id & 0 \end{smallmatrix}\right)$ is the standard symplectic form. Gaussian states are by definition (limits of) thermal states of quadratic Hamiltonians, i.e., Hamiltonians of the form $\pazocal{H}=\frac12 R^\intercal H R - s^\intercal H R$ for some $2n\times 2n$ positive semidefinite matrix $H$ and some real vector $s\in\R^{2n}$.
A Gaussian state $\rho \coloneqq \rho_{\G}[V,t]$ is uniquely described by its mean or displacement vector $t \coloneqq \Tr[\rho R] \in \R^{2n}$ and its quantum covariance matrix (QCM) $V\coloneqq \Tr \left[\rho\, \{R-t,(R-t)^\intercal\}\right]$, which is a $2n\times 2n$ real symmetric matrix. It turns out that legitimate QCMs of (Gaussian) states are exactly those matrices $V$ that satisfy the Robertson--Schr\"odinger uncertainty principle~\cite{simon94}:
\bb
V \geq i \Omega\, ,
\label{bona fide}
\ee
where the inequality has to be intended in the sense of positive semidefiniteness.

We now move on to Gaussian operations. The simplest example of a Gaussian operation is a symplectic unitary, i.e., a unitary that is generated by a quadratic Hamiltonian. Such a unitary $U$ acts as a symplectic\footnote{A $2n\times 2n$ matrix $S$ is said to be symplectic if it preserves the form $\Omega$, i.e., if $S \Omega S^\intercal = \Omega$.} linear transformation $URU^\dag = SR$ on the vector of canonical operators, while on Gaussian states one has that $U^\dag \rho_{\G}[V,t]\, U = \rho_{\G} [ SVS^\intercal, St]$. Interestingly, symplectic unitaries can be used to bring any Gaussian state into a particularly simple normal form called Williamson's form. The QCM of a state in Williamson's form is simply $V=\left( \begin{smallmatrix} D & \\ & D \end{smallmatrix}\right)$, where $D\geq \id$ is a diagonal matrix whose entries -- called \emph{symplectic eigenvalues} -- depend on $V$ only~\cite{willy}.

Gaussian measurements are represented in the positive operator-valued measure (POVM) formalism by the family of operators $\{\rho_{\G}[\gamma,t]\}_{t\in \R^{2n}}$, where $\gamma$ is a QCM called the seed of the measurement. Note that the normalization condition $\int \frac{d^{2n}t}{(2\pi)^n} \rho_{\G}[\gamma,t] = \id$ is satisfied; here, $\id$ represents the identity acting on the whole Hilbert space. Performing a Gaussian measurement with seed $\gamma_B$ on the $B$ system of a bipartite Gaussian state with QCM $V_{AB}$ yields an outcome distributed normally with covariance matrix $(V_B+\gamma_B)/2$. The post-measurement state on $A$ is again Gaussian, and its QCM $(V_{AB}+\gamma_B)\big/ (V_B+\gamma_B)$ is \emph{independent} of the measurement outcome. Here, for a $2\times 2$ block matrix $M=\left(\begin{smallmatrix} A & X \\ X^\intercal & B \end{smallmatrix}\right)$ the \emph{Schur complement} $M/B$ of $M$ with respect to one of its (invertible) sub-blocks $B$ is defined as $M/B\coloneqq A - XB^{-1}X^\intercal$~\cite{ZHANG}. General non-deterministic Gaussian operations are obtained by appending ancillary Gaussian states, applying symplectic unitaries and performing Gaussian measurements. When a Gaussian operation $\Lambda_{A\to B}$ with input system $A$ and output system $B$ is performed on a Gaussian state, the QCM transforms as 
\bb
\Lambda_{A\to B}\!:\, V_A\longmapsto (\Gamma_{\!AB} + \Sigma_A V_A \Sigma_A)\big/ (\Gamma_{\!A} + \Sigma_A V_A \Sigma_A)\, ,
\label{G operation}
\ee
where $\Gamma_{AB}$ is a QCM pertaining to the joint system $AB$ and characterizing $\Lambda_{A\to B}$, and $\Sigma\coloneqq \left( \begin{smallmatrix} \id & \\ & -\id \end{smallmatrix}\right)$ is the matrix that reverts the sign of the momenta~\cite[Eq.~(10a)]{nogo3}.

We now review the formalism of Gaussian resource theories~\cite{G-resource-theories}. Throughout this paper we will assume that the six postulates proposed in~\cite{G-resource-theories} are satisfied.
Central concepts in the resource theory formalism are a set of free states and a class of free quantum operations. Here, we are interested in a hybrid theory, which takes into account quantum resources restricted to the Gaussian regime. Let $\mF(\blam)$ be the set of free states where $\blam=(\lambda_1,\dots,\lambda_l)$ is a vector variable that specifies the structure of all the ``spatially separated'' subsystems involved, here labelled from $1$ to $l$. For instance, one such variable will be the total number of modes $n_j$ of each subsystem $j$.
Let $\mG_N$ be the set of Gaussian states over $N$ modes. Then, the intersection between the set of free states of the interested resource and the set of Gaussian states, $\mF^{\G}(\blam)=\mF(\blam)\cap\mG_N$ where $N=\sum_j n_j$, defines the set of free Gaussian states, which is our basic object of study. Since in our setting displacement unitaries are always free, $\mF^{\G}(\blam)$ is entirely described by the set of free QCMs $\mV^\G(\blam)\coloneqq \left\{ V:\, \exists\, t\in \R^{2N}:\, \rho_\G[V,t]\in\mF^\G(\blam) \right\}$, in formula $\mF^{\G}(\blam) = \left\{ \rho_\G[V,t]:\, V\in \mV^\G(\blam),\, t\in \R^{2N} \right\}$.

The second ingredient of a Gaussian resource theory is a class of Gaussian quantum channels that are considered to be free. This set can be a priori arbitrary, but we will always require that a free Gaussian operation does not transform free Gaussian states into non-free ones. It will always be clear from the context what particular class of free operations we are looking at.

Before we define the quantities we will study here, we need to fix some terminology. Given a Gaussian quantum resource theory, a \emph{Gaussian monotone} is by definition a function $R$ defined on all Gaussian states and taking on real values, which is non-increasing under the chosen set of Gaussian free operations. This readily implies that any such function must in fact be invariant under any free Gaussian operation whose inverse is also free. Since displacement unitaries are free, we deduce that for a Gaussian state $\rho_\G[V,t]$, the function $E\left( \rho_\G[V,t]\right)$ is in fact only a function of the quantum covariance matrix $V$. With a slight abuse of notation, we will therefore write $E(V)$ instead of $E\left( \rho_\G[V,t]\right)$ in what follows. Combining~\cite[Lemma~S.2]{G-resource-theories} and~\cite[Lemma~1]{Jack2015}, we see that $R$ is continuous as a function of the Gaussian state (with respect to the trace norm) if and only if it is continuous as a function of the covariance matrix. When this happens, $R$ is said to be \emph{continuous}. Another important property is Gaussian convexity: if whenever $\rho=\sum_i p_i \rho_i$ is a Gaussian state that can be written as a convex combination\footnote{We include implicitly also the case where a limit of convex combinations, often written as an integral, has to be taken.} of other Gaussian states $\rho_i$, it holds that $R(\rho) \leq \sum_i p_i R(\rho_i)$, then $R$ is said to be \emph{Gaussian convex}.

\subsection{Assisted resource concentration} \label{subsec assisted}

Here, we are interested in the cooperative scenario featuring two parties, Alice and Bob, whose corresponding quantum systems we denote with $A,B$. The extension (maybe even the purification) of Alice's Gaussian state with covariance matrix $V_A$ is held by Bob, who will help her to implement the desired state transformation $V_A\rightarrow W_A$. We imagine that Bob is restricted to Gaussian operations only but assume that he can implement all Gaussian operations, not only free ones. 
Concerning classical communication between Alice and Bob, it is customary to consider two different settings: (a) there is one-way communication from $B$ to $A$; and (b) we allow for back-and-forth communication at any stage of the protocol.
In the context of entanglement, setting (a) identifies the measure known as entanglement of assistance~\cite{Cohen1998, DiVincenzo1999}, while (b) leads to the definition of the entanglement of collaboration~\cite{Gour2006}. Although the latter is always an upper bound for the former, it has been shown that there can be strict inequality~\cite{Gour2006}. This tells us that in general two-way classical communication must be included into the picture.

Both scenarios (a) and (b) can be easily generalized to the case of arbitrary resources. In the case of (b), the set of operations $A$ and $B$ can perform can be dubbed \emph{Gaussian local free operations and classical communication} and abbreviated as GLFCC. 
We formally quantify the effectiveness of Gaussian resource concentration in these two settings as follows. 

\begin{Def}
Let $R$ be a Gaussian monotone of a Gaussian resource theory over system $A$. Let the $AB$ bipartite system be in an initial Gaussian state with covariance matrix $V_{AB}$. We assume that $A$ is restricted to free Gaussian operations, while $B$ can perform arbitrary Gaussian operations.
\begin{enumerate}[(a)]
\item If only $B\rightarrow A$ classical communication is allowed, the maximal amount of resource generated by $A$ is given by the \emph{Gaussian resource of one-way collaboration}:
\bb
\ \ R_{c,\leftarrow}^{\G}(V_{AB}) \coloneqq\! \sup_{\gamma_B\geq i\Omega_B}\! R\left( (V_{AB}\!+\! \gamma_B )\big/ (V_B\!+\! \gamma_B ) \right) .
\label{1-way collaboration}
\ee
In particular, if $B$ holds a purification of $A$, we call
\bb
R_a^{\G}(V_A)\coloneqq R_{c,\leftarrow}^{\G}(V_{AB})
\label{assistance}
\ee
the \emph{Gaussian resource of assistance}.
\item Allowing for two-way classical communication instead, we obtain the \emph{Gaussian resource of collaboration}:
\bb
\ R_c^{\G}(V_{AB}) \coloneqq \sup_{\Lambda_{AB\rightarrow A'}\in \mathrm{GLFCC}} R\left( \Lambda_{AB\rightarrow A'}(V_{AB})\right) .
\label{collaboration}
\ee
\end{enumerate}
\end{Def}

\begin{figure}
\centering
\includegraphics[page=1]{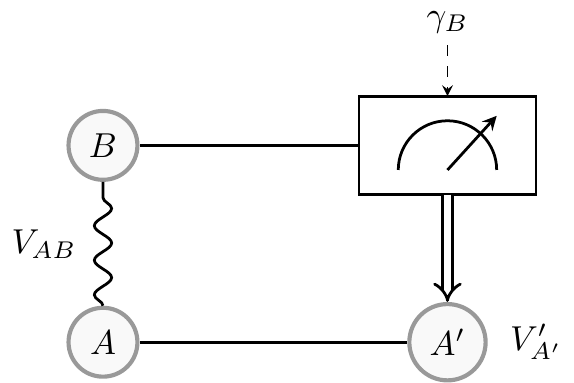}
\caption{A pictorial representation of the one-way collaborative Gaussian concentration paradigm. Alice and Bob's initial Gaussian state has QCM $V_{AB}$. Bob applies a Gaussian measurement with seed $\gamma_B$ to his share of the system and then communicates the outcome to Alice, who then obtains a Gaussian state with QCM $V'_{A'} = (V_{AB} + \gamma_B )\big/ (V_B + \gamma_B )$.}
\label{1-way fig}
\end{figure}

\begin{figure}
\centering
\includegraphics[page=2]{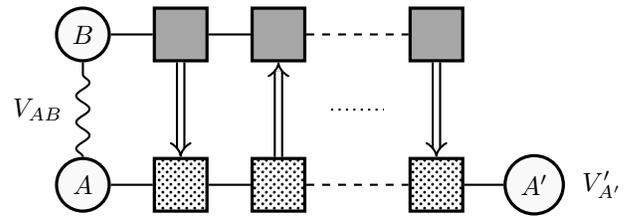}
\caption{A pictorial representation of a two-way collaborative Gaussian concentration scheme. Now, Alice and Bob are allowed to manipulate their initial Gaussian state with QCM $V_{AB}$ using arbitrary Gaussian operations on Bob's side (filled boxes), free Gaussian operations on Alice's (patterned boxes), and two-way classical communication.}
\label{2-way fig}
\end{figure}

To justify~\eqref{1-way collaboration}, remember that the Schur complement on the r.h.s.\ represents the post-measurement QCM obtained when Bob makes a Gaussian measurement with seed $\gamma_B$ on his share of the system.

The setting considered so far involves one-shot manipulation of a single copy of a Gaussian state. From the quantum information theory perspective, it is also natural to investigate the opposite case, i.e., that of asymptotic state manipulation. In this modified scenario, Alice and Bob share $\ell$ copies of a Gaussian state with QCM $V_{AB}$, the global QCM thus taking the form $V_{AB}^{\oplus \ell}$, and may manipulate them jointly, with the help of either one-way or two-way communication. We are thus led to define the following regularized measures.
\begin{enumerate}[(a')]
\item The \emph{regularized Gaussian resource of one-way collaboration}:
\bb
R_{c,\leftarrow}^{\G,\infty}(V_{AB}) \coloneqq \lim_{\ell\to\infty} \frac{1}{\ell} R_{c,\leftarrow}^\G\big( V_{AB}^{\oplus \ell}\big) . 
\label{regularized 1-way collaboration}
\ee
When $B$ holds a purification of $A$ we obtain the \emph{regularized Gaussian resource of assistance}:
\bb
R_a^{\G, \infty}(V_A)\coloneqq \lim_{\ell\to\infty} \frac{1}{\ell} R_a^{\G}\big(V_{AB}^{\oplus \ell}\big) .
\label{regularized assistance}
\ee
\item The \emph{regularized Gaussian resource of collaboration}:
\bb
R_c^{\G, \infty}(V_A)\coloneqq \lim_{\ell\to\infty} \frac{1}{\ell} R_c^{\G}\big(V_{AB}^{\oplus \ell}\big) .
\label{regularized collaboration}
\ee
\end{enumerate}
Since the measures in~\eqref{1-way collaboration}--\eqref{collaboration} are defined via optimizations, it is not difficult to verify that they never decrease under regularization. Naturally, this corresponds to the fact that a possible strategy for Bob is always to measure each copy of the state separately.

In the next section we put order in this zoo of measures, by showing that the Gaussian resource of collaboration and one-way collaboration always coincide. The same is then naturally true for their regularized versions. Introducing and discussing them separately was however no futile exercise, as it helps to keep them conceptually separated. It is indeed important to appreciate that the fact that several of them coincide is really a peculiarity of the Gaussian framework, and will not be the case in other settings. For example, we already mentioned that entanglement of assistance and entanglement of collaboration do not coincide in general~\cite{Gour2006}.

\section{Gaussian assisted resource concentration: first results} \label{sec G assisted distillation}

We start by presenting the main result of this section, i.e., the equality between Gaussian resource of collaboration and one-way collaboration. The operational interpretation of the result is straightforward: one-way communication from Bob (the assisting party) to Alice (the assisted party) is sufficient for optimal resource extraction.

\begin{thm} \label{assistance=collaboration G thm}
The Gaussian resource of collaboration and the Gaussian resource of one-way collaboration coincide, in formula
\bb
R_{c,\leftarrow}^\G (V_{AB}) \equiv R_c^\G(V_{AB})\, .
\label{assistance=collaboration}
\ee
In particular, $R_{c,\leftarrow}^\G$ is a GLFCC monotone.
\end{thm}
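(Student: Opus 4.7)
The easy direction $R_{c,\leftarrow}^\G(V_{AB}) \le R_c^\G(V_{AB})$ is immediate, since every one-way protocol (Bob measures with seed $\gamma_B$ and sends the outcome to Alice, who absorbs it into a free displacement) is a special case of a GLFCC protocol. My effort would go into the converse. The plan is to show that for any GLFCC protocol on $V_{AB}$ the resulting QCM $W_A$ on Alice's side admits a decomposition
\bbb
W_A = \Lambda_A^{\mathrm{free}}\!\left( (V_{AB}+\gamma_B^{\mathrm{eff}})\big/(V_B+\gamma_B^{\mathrm{eff}}) \right)
\eee
for some valid seed $\gamma_B^{\mathrm{eff}} \ge i\Omega_B$ and some free Gaussian channel $\Lambda_A^{\mathrm{free}}$; by monotonicity of $R$ under free operations this bounds $R(W_A) \le R_{c,\leftarrow}^\G(V_{AB})$, and taking the supremum over protocols would yield $R_c^\G \le R_{c,\leftarrow}^\G$.

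To produce this decomposition I would first condition on a complete realization of all classical messages exchanged during the protocol. This is harmless at the level of covariances: the post-measurement QCM of any Gaussian measurement is outcome-independent, so outcomes only shift mean vectors, which Alice then absorbs via free displacements. Under this conditioning the protocol becomes a fixed, deterministic sequence of quantum operations alternating between Bob's arbitrary Gaussian operations on $B$ (with local ancillas) and Alice's free Gaussian operations on $A$. Crucially, these act on disjoint tensor factors of the joint Hilbert space and hence commute, so I may reorder them as ``all of Bob's Gaussian operations first, then all of Alice's free operations''. The latter composes into a single free Gaussian channel $\Lambda_A^{\mathrm{free}}$, while the former is an arbitrary Gaussian action on $B$-plus-ancillas terminating in a Gaussian measurement.

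The crucial remaining step is to collapse this composite Gaussian action on Bob's side into a \emph{single} effective Gaussian measurement on $B$ alone, with some seed $\gamma_B^{\mathrm{eff}} \ge i\Omega_B$ whose action on $V_{AB}$ is precisely the Schur complement of formula~\eqref{G operation}. This is a Stinespring-type dilation argument: any finite sequence of Gaussian channels on $B$ with local ancillas can be folded into the canonical form ``attach one Gaussian ancilla, apply one global symplectic unitary on $B$-plus-ancilla, measure the ancilla register with one Gaussian POVM'', and marginalizing the auxiliary modes through the standard block-matrix identities for Schur complements then produces the single seed $\gamma_B^{\mathrm{eff}}$. The main technical obstacle is to carry out this telescoping rigorously and verify that the resulting $\gamma_B^{\mathrm{eff}}$ indeed satisfies the bona fide condition~\eqref{bona fide}. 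Once~\eqref{assistance=collaboration} is established, the assertion that $R_{c,\leftarrow}^\G$ is a GLFCC monotone is then immediate, since $R_c^\G$ is manifestly non-increasing under GLFCC (composition of GLFCC channels stays in the class), and this property transfers to $R_{c,\leftarrow}^\G$ via the just-proved equality.
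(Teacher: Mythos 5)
Your proposal is correct and follows essentially the same route as the paper: it exploits the outcome-independence of the post-measurement QCM to derandomize the adaptive protocol into a fixed sequence of local operations (the paper's Lemma~\ref{GLFCC lemma}), collapses Bob's side into a single effective Gaussian measurement (the paper does this via the characteristic QCM $\Gamma_{BB'}$ of his overall channel, whose $B$-marginal $\Sigma_B\Gamma_B\Sigma_B$ is automatically a bona fide seed), and discards Alice's local free operations by monotonicity of $R$. The only difference is presentational: the paper carries out your ``telescoping'' step in one stroke using the $\Gamma$-matrix representation~\eqref{G operation} and Schur-complement block identities rather than an explicit Stinespring-style dilation.
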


In light of its new status, it is important to simplify the computation of the Gaussian resource of one-way collaboration. We now show that the Gaussian measurement on $B$ in~\eqref{1-way collaboration} can always be assumed to have a pure seed.

\begin{prop} \label{assistance optimisation pure prop}
The supremum in~\eqref{1-way collaboration} can be restricted to pure QCMs $\gamma_{B}$ without loss of generality:
\bb
\begin{aligned}
R_c^\G(V_{AB}) &= R_{c,\leftarrow}^{\G}(V_{AB}) \\
&= \sup_{\substack{\\ \gamma_B\geq i\Omega_B \\[.3ex] \text{\emph{$\gamma_B$ pure}}}} R\left( (V_{AB}+ \gamma_B )\big/ (V_B+ \gamma_B ) \right) .
\end{aligned}
\label{eq:assistance optimisation pure}
\ee
\end{prop}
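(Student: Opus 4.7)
My plan is to show that the supremum in~\eqref{1-way collaboration} is attained on pure seeds by associating to every seed $\gamma_B$ a pure seed $\gamma_B^{(p)}$ whose post-measurement QCM on Alice is dominated in the L\"owner order by the one produced by $\gamma_B$, and then invoking freeness of classical Gaussian noise on Alice's side.

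The first step is the choice of $\gamma_B^{(p)}$. Since $\gamma_B\geq i\Omega_B$ forces $\gamma_B$ to be strictly positive definite, Williamson's theorem gives $\gamma_B=SDS^\intercal$ with $S$ symplectic and $D=\mathrm{diag}(\nu_1,\ldots,\nu_{n_B})\oplus\mathrm{diag}(\nu_1,\ldots,\nu_{n_B})$, $\nu_j\geq 1$. I take $\gamma_B^{(p)}\coloneqq SS^\intercal$: its symplectic spectrum is identically $1$, so it is a bona fide pure QCM, and $\gamma_B-\gamma_B^{(p)}=S(D-\id)S^\intercal\geq 0$. The second step is a Schur-complement comparison. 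Writing $V_{AB}$ in block form with off-diagonal block $W$, one has $(V_{AB}+\gamma)\big/(V_B+\gamma)=V_A - W(V_B+\gamma)^{-1}W^\intercal$ whenever $V_B+\gamma>0$, which is guaranteed by $V_B,\gamma\geq i\Omega_B$. Operator monotonicity of matrix inversion then gives $(V_B+\gamma_B)^{-1}\leq (V_B+\gamma_B^{(p)})^{-1}$, hence
\[
(V_{AB}+\gamma_B)\big/(V_B+\gamma_B)\;\geq\;(V_{AB}+\gamma_B^{(p)})\big/(V_B+\gamma_B^{(p)}),
\]
and the L\"owner difference $Y$ between the two sides is positive semidefinite.

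The third and final step is to recognize that $V_A\mapsto V_A+Y$ is the QCM-level action of a random-displacement Gaussian channel, i.e., of a convex combination of displacement unitaries with an appropriate zero-mean Gaussian weight. Since displacements are free in the Gaussian resource theories considered here, and the free class is closed under classical mixtures by the standing postulates of~\cite{G-resource-theories}, this noise injection is a free Gaussian operation on Alice. Monotonicity of $R$ then yields $R\!\left((V_{AB}+\gamma_B)\big/(V_B+\gamma_B)\right)\leq R\!\left((V_{AB}+\gamma_B^{(p)})\big/(V_B+\gamma_B^{(p)})\right)$; combined with the trivial reverse inequality and with Theorem~\ref{assistance=collaboration G thm}, this gives~\eqref{eq:assistance optimisation pure}. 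The step I anticipate will require most care is the last one: verifying that additive classical Gaussian noise really belongs to the chosen class of free operations. This is the only place where the axiomatic structure of the Gaussian resource theory enters, and it rests on the postulate that free operations are closed under convex combinations of free unitaries.
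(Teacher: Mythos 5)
Your proposal is correct and follows essentially the same route as the paper: replace $\gamma_B$ by a pure QCM $\gamma'_B\leq\gamma_B$ (you make this explicit via Williamson's form, the paper just asserts existence), use monotonicity of the Schur complement in the L\"owner order, and observe that the resulting increase of Alice's QCM is a free random-displacement (classical noise) channel under which $R$ cannot increase. The only cosmetic difference is that the paper phrases the last step via Gaussian convexity of the monotone rather than via freeness of noise addition, but these amount to the same argument.
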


\begin{proof}
The first equality is just~\eqref{assistance=collaboration}. As for the second, consider that for every matrix $\gamma_{B}\geq i\Omega_B$ we can find some pure $\gamma'_B$ such that $\gamma_B \geq \gamma'_B\geq i\Omega_B$. Schur complements are monotonic with respect to the positive semidefinite order, and hence
\bbb
(V_{AB}+ \gamma_B )\big/ (V_B+ \gamma_B )\geq (V_{AB}+ \gamma'_B )\big/ (V_B+ \gamma'_B)\, .
\eee
The proof is concluded by noticing that convex monotones must be decreasing with respect to the positive semidefinite order as increasing the covariance matrix corresponds to acting with random displacements on the underlying quantum state. 
\end{proof}

Until now, our results do not make any special assumption on the bipartite state on $AB$, and thus apply also to the typical experimental setting. The picture becomes a bit clearer, and the assistance protocol more effective, when the global initial state is pure, i.e., when $B$ holds the whole purification of $A$'s reduced state. The analysis of this scenario allows to benchmark the ultimate performances of our Gaussian framework. From the mathematical point of view, it leads to a significant simplification of the optimization~\eqref{assistance}.

\begin{prop}
Let $R$ be a continuous resource monotone. Then the Gaussian resource of assistance of a QCM $V_{A}$ is given by
\bb
R_a^{\G}(V_{A}) = \sup_{\text{\emph{$\tau_A\!\leq\! V_A$ pure QCM}}} R\left( \tau_A \right) 
\label{assistance pure}
\ee
\end{prop}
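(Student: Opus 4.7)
The plan is to show that the map
$$\gamma_B \longmapsto (V_{AB}+\gamma_B)\big/(V_B+\gamma_B),$$
restricted to pure QCMs $\gamma_B\geq i\Omega_B$, has range equal (up to closure) to the set $\{\tau_A : i\Omega_A\leq \tau_A\leq V_A,\ \tau_A \text{ pure}\}$; then continuity of $R$ promotes the equality of ranges into the equality of suprema. By Proposition~\ref{assistance optimisation pure prop}, we may already restrict the supremum in~\eqref{assistance} to pure seeds $\gamma_B$, so this is the only reduction left to perform.

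For the inclusion ``$\leq$'' in~\eqref{assistance pure}, I would argue as follows. If $V_{AB}$ is pure and $\gamma_B$ is pure, then the joint state is pure and Bob's Gaussian measurement projects his share onto a pure coherent-like state; therefore Alice's conditional state, whose QCM is the Schur complement $\tau_A = (V_{AB}+\gamma_B)/(V_B+\gamma_B)$, must itself be pure. Moreover, writing the Schur complement explicitly as $\tau_A = V_A - X_{AB}(V_B+\gamma_B)^{-1}X_{AB}^\intercal$, where $X_{AB}$ is the off-diagonal block of $V_{AB}$, one reads off $\tau_A \leq V_A$ immediately. Hence every achievable post-measurement QCM is of the form required on the right-hand side of~\eqref{assistance pure}, giving $R_a^\G(V_A)\leq \sup_{\tau_A\leq V_A,\ \tau_A \text{ pure}} R(\tau_A)$.

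For the converse inclusion ``$\geq$'', I would show that every pure $\tau_A\leq V_A$ is the limit of post-measurement QCMs obtained from pure $\gamma_B$ on a purification. The conceptual picture is the Gaussian analogue of the Hughston--Jozsa--Wootters (or Schr\"odinger--Gisin) theorem: the inequality $V_A \geq \tau_A$ is equivalent to the decomposition $V_A = \tau_A + \Gamma$ with $\Gamma\geq 0$, which represents $\rho_\G[V_A]$ as a Gaussian mixture of displaced copies of $\rho_\G[\tau_A]$; any such ensemble can be steered from the purification by a suitable Gaussian measurement on Bob. Concretely, I would work in the canonical form in which $V_{AB}$ is a direct sum of two-mode squeezed vacua (plus vacuum modes on $B$), and exhibit the explicit $\gamma_B$: this reduces to inverting the Schur-complement equation $X_{AB}(V_B+\gamma_B)^{-1}X_{AB}^\intercal = V_A - \tau_A$ on the support of $X_{AB}$, which is solvable when $\ker X_{AB}^\intercal \subseteq \ker (V_A-\tau_A)$, a condition that follows from purity of $V_{AB}$ combined with $\tau_A\leq V_A$.

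The main obstacle is this converse direction, specifically handling the limiting cases. When the required $\gamma_B$ has symplectic eigenvalues tending to infinity in some phase-space directions (i.e., infinitely squeezed or homodyne-type measurements on Bob), the exact equation cannot be solved within genuine pure QCMs but only approached. This is precisely why the hypothesis that $R$ is continuous on covariance matrices enters: it allows the supremum to be attained over the closure, so that $R(\tau_A) = \lim_k R\bigl((V_{AB}+\gamma_B^{(k)})/(V_B+\gamma_B^{(k)})\bigr)$ for an approximating sequence of bona fide pure seeds $\gamma_B^{(k)}$, yielding $R_a^\G(V_A) \geq R(\tau_A)$ for every pure $\tau_A\leq V_A$ and completing the proof.
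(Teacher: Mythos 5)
Your strategy is the same as the paper's: restrict to pure seeds via Proposition~\ref{assistance optimisation pure prop}, show that the set of achievable post-measurement QCMs coincides, up to closure, with $\{\tau_A \leq V_A:\ \tau_A \text{ pure}\}$, and let continuity of $R$ absorb the closure. The paper, however, disposes of the set equality in one line by citing Proposition~4 of Ref.~\cite{LL-log-det}, which is exactly the Gaussian steering statement you set out to reprove. Your forward inclusion is complete and correct: purity of the conditional state for pure $V_{AB}$ and pure $\gamma_B$, together with $\tau_A = V_A - X_{AB}(V_B+\gamma_B)^{-1}X_{AB}^\intercal \leq V_A$, is exactly right, as is your identification of where the continuity hypothesis is actually needed (homodyne-type limits). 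The converse is where all the substance lies, and there your argument is a plan rather than a proof: the kernel condition $\ker X_{AB}^\intercal \subseteq \ker(V_A - \tau_A)$ is only the \emph{necessary} condition for solvability of $X_{AB}(V_B+\gamma_B)^{-1}X_{AB}^\intercal = V_A - \tau_A$; you still have to show that the resulting candidate seed (schematically $\gamma_B = X_{AB}^\intercal (V_A-\tau_A)^{-1} X_{AB} - V_B$ on the relevant support, suitably regularized when $V_A - \tau_A$ is singular) satisfies $\gamma_B \geq i\Omega_B$ \emph{and} is pure, i.e., is a legitimate Gaussian measurement. That verification is precisely the content of the cited proposition and does not follow from the dimension count alone; as written, your proposal would need either to carry out that computation in the two-mode-squeezed canonical form or to invoke the reference, as the paper does.
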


\begin{proof}
By virtue of~\cite[Proposition~4]{LL-log-det}, for all pure QCMs $V_{AB}$ we have the equality
\begin{align*}
&\overline{\left\{ (V_{AB}+ \gamma_B )\big/ (V_B+ \gamma_B):\ \text{$\gamma_B$ pure QCM} \right\}}\\
&= \left\{ \tau_A: \text{$\tau_A\leq V_A$, $\tau_A$ pure QCM} \right\} ,
\end{align*}
where $\overline{\pazocal{S}}$ denotes the closure of the set $\pazocal{S}$. Since the function $R$ is continuous, the closure does not affect the optimization, hence~\eqref{eq:assistance optimisation pure} becomes~\eqref{assistance pure}.
\end{proof}

The above proof reveals a very intuitive fact: for a given Gaussian state $\rho_\G[V,t]$ on system $A$, and a purification of it on the bipartite system $AB$, every ensemble on $A$ composed by displaced copies of the same Gaussian pure state $\rho_\G[\tau,0]$, with average $\rho_\G[V,t]$, can be obtained by performing a suitable Gaussian measurement on $B$ (or a limit of such measurements). Observe that the possible QCMs $\tau$ here are exactly those satisfying $\tau\leq V$. This is nothing but the Gaussian version of a classic result by Schr\"odinger himself~\cite{schr}. For this reason, in what follows we will often identify a possible strategy by Bob by giving the pure state $\tau$ instead of the measurement $\gamma$ on $B$.

\section{Case study I: squeezing} \label{sec squeezing}

The Gaussian resource theory of squeezing on an $n$-mode system is defined by the set of free QCMs $\mV^{\G}_S(n)\coloneqq \left\{V: V\geq \id\right\}$. This is arguably one of the simplest examples of a Gaussian resource theory. In spite of its simplicity, it already exhibits some interesting features that make it an excellent candidate to investigate and answer some natural questions that arise in the study of Gaussian resources.

An example of such a question is the following: is the optimal measurement on Bob's system in~\eqref{1-way collaboration} -- equivalently, in~\eqref{eq:assistance optimisation pure} -- always independent of the particular monotone $R$ one chooses? In other words, is it always true that one can find an optimal measurement on $B$, whose corresponding induced state on $A$ can be transformed into any other such induced state by means of free operations? Interestingly, this is not the case: the following proposition states that the point achieving the maximum in the optimization~\eqref{assistance pure} can depend in a nontrivial way on the monotone, and hence that the set of states induced on $A$ by measurements on $B$ does not have a maximum with respect to the ordering dictated by convertibility via free operations.

\begin{prop}
Consider the resource theory of Gaussian squeezing over a single system. There exists a QCM $V$ and two pure QCMs $\tau_1, \tau_2\leq V$ such that no pure QCM $\tau$ with $\tau\leq V$ satisfies $\tau\rightarrow \tau_1$ other than $\tau_1$ itself, but $\tau_1\not\rightarrow \tau_2$ by means of free operations.
\label{dependence monotone prop}
\end{prop}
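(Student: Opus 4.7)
The plan is to construct and verify an explicit example. I would take the single-mode QCM $V \coloneqq \mathrm{diag}(a, b)$ with fixed parameters $a > b > 1$, and set
\[
\tau_1 \coloneqq \mathrm{diag}(a, 1/a), \qquad \tau_2 \coloneqq \mathrm{diag}(1/b, b).
\]
Both are pure (determinant one), and $\tau_1, \tau_2 \leq V$ is immediate from $a, b > 1$ (which yield $b \geq 1/a$ and $a \geq 1/b$).

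The first technical ingredient is to characterise pure-to-pure free operations. A deterministic single-mode Gaussian channel acts on QCMs as $V \mapsto X V X^\intercal + Y$; imposing that it map every pure QCM to a pure QCM forces $Y = 0$, since $\det(XVX^\intercal) = (\det X)^2$ is already independent of $V$ and hence the requirement $\det(XVX^\intercal + Y) = 1$ uniformly in pure $V$ kills the $Y$-dependent contributions. Combined with the CP condition, this gives $X \Omega X^\intercal = \Omega$, so $X$ is symplectic; freeness in the squeezing theory (preserving $V \geq \id$, equivalently $XX^\intercal \geq \id$) then restricts $X$ to $SO(2)$, i.e., to a passive rotation $R(\theta)$. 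Therefore $\tau \to \tau'$ between pure QCMs via a free operation is equivalent to $\tau' = R(\theta)\, \tau\, R(\theta)^\intercal$ for some $\theta$, and in particular the spectrum $\{d, 1/d\}$ of the pure QCM is preserved.

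The second ingredient is a direct computation that pins down $\tau_1$ within the pure QCMs $\leq V$ having eigenvalues $\{a, 1/a\}$. Setting $\tau(\theta) \coloneqq R(\theta)\, \tau_1\, R(\theta)^\intercal$, expansion yields
\[
\det\bigl(V - \tau(\theta)\bigr) = (a - 1/a)(b - a)\, \sin^2\theta,
\]
which is strictly negative whenever $\sin\theta \neq 0$, since $a > b > 1$. Hence $\tau(\theta) \leq V$ forces $\sin\theta = 0$ and $\tau(\theta) = \tau_1$. Combining the two ingredients: if a pure $\tau \leq V$ satisfies $\tau \to \tau_1$ by free operations, then $\tau$ is a rotation of $\tau_1$ and so must equal $\tau_1$; conversely, $\tau_2$ has eigenvalues $\{b, 1/b\} \neq \{a, 1/a\}$, so it cannot be a rotation of $\tau_1$, and therefore $\tau_1 \not\to \tau_2$ by free operations.

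The main obstacle is the first ingredient: justifying that pure-to-pure deterministic free Gaussian channels on a single mode reduce to passive rotations. This relies on a careful reading of the class of free operations in the squeezing resource theory and on the rigidity of the $V \mapsto X V X^\intercal + Y$ parametrisation; once that step is in hand, the rest collapses to the one-line determinant calculation above.
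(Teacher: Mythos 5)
There is a genuine gap, and it is fatal to the construction. Your first ingredient characterizes only the \emph{deterministic} free Gaussian channels $V\mapsto XVX^\intercal+Y$, but the free operations of this theory also include non-deterministic ones: appending an ancilla in a free state, applying a passive unitary, and performing a Gaussian measurement on the ancilla together preserve the free set $\{V\geq\id\}$ (the post-measurement QCM is a Schur complement, which is monotone, so a free joint state yields a free conditional state), and such protocols can strictly reduce the squeezing of a pure state while keeping it pure. Concretely, in your example $\tau_1\rightarrow\tau_2$ \emph{is} achievable by free operations: rotate $\tau_1=\mathrm{diag}(a,1/a)$ by $\pi/2$ to $\mathrm{diag}(1/a,a)$, mix with the vacuum on a beam splitter of transmissivity $\eta=(b-1)/(a-1)$, and homodyne the ancilla's position quadrature; the conditional state is pure with QCM $\mathrm{diag}(1/c,c)$ where $c=\eta a+1-\eta=b$, i.e., exactly $\tau_2$. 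Every step maps free states to free states, so claim (b) of your argument fails; the only spectral obstruction available on one mode is $\lambda_{\max}$, and $\lambda_{\max}(\tau_2)=b<a=\lambda_{\max}(\tau_1)$ poses none.

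The deeper issue is that no single-mode example can work: pure single-mode QCMs are totally ordered under free convertibility by their largest eigenvalue (rotations together with the de-squeezing protocol above realize every downward conversion), so the set of pure $\tau\leq V$ always has a genuine maximum, namely the $\tau$ with $\lambda_{\max}(\tau)=\lambda_{\max}(V)$, whose existence is guaranteed by Lemma~\ref{pure lower bound}. The paper therefore takes $V$ to be a \emph{two-mode} QCM. There, $\tau_1$ is pinned down as the unique pure $\tau\leq V$ maximizing first $\lambda_{\max}$ and then the second-largest eigenvalue, while $\tau_2$ is a product of local pure states with $\lambda_2^\downarrow(\tau_2)=2>a-b=\lambda_2^\downarrow(\tau_1)$; the conversion $\tau_1\rightarrow\tau_2$ is then excluded by a dedicated family of monotones, $\min\{\lambda_i^\uparrow(\cdot),1\}$ with $i=2$, proved non-decreasing under \emph{all} free-state-preserving Gaussian operations (Lemma~\ref{spectrum monotone lemma}) by working with the general Schur-complement form~\eqref{G operation} and the law of additivity of inertia. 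That second spectral invariant, which does not exist on a single mode, is what makes the incomparability in the proposition possible.
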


We now move on to the problem of computing the Gaussian squeezing of assistance. As usual, we have to select a monotone to perform the calculation. A commonly employed quantifier in this setting is the so-called \emph{maximal squeezing}, given by~\cite{Fiurasek2004}
\bb
S(V)\coloneqq \max\left\{1,\lambda_{\min}(V)^{-1}\right\} ,
\label{maximal squeezing}
\ee
where $\lambda_{\min}(V)$ denotes the minimal eigenvalue of $V$, i.e., twice the minimal quadrature variance of $\rho_\G[V,t]$ across all possible directions in phase space. Observe that $S$ admits the variational representation $S(V)=\max\{s\geq 1:\,sV\geq \id\}$, and is thus a special case of the general monotone considered in~\cite{G-resource-theories}, defined for an arbitrary Gaussian resource theory by $\kappa(V)=\max\{s\geq 1:\,sV\in \mV^{\G}(\blam)\}$.
We now show that the optimization~\eqref{assistance pure} can be solved explicitly for the resource theory of squeezing when the monotone employed is the maximal squeezing.
This provides a nontrivial example of an explicit computation which gives the optimal assistance.

\begin{thm} \label{S assistance squeezing thm}
Consider the Gaussian resource theory of squeezing over a single system. The Gaussian maximal squeezing of assistance defined in~\eqref{assistance pure} is given by the maximal eigenvalue of the QCM:
\bb
S^{\G}_a(V_{A}) = \lambda_{\max} (V_A)\, .
\label{maximal squeezing a}
\ee
\end{thm}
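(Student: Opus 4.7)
My plan is to exploit the pure-state reformulation~\eqref{assistance pure}, which reads $S^\G_a(V_A) = \sup\{S(\tau) : \tau\leq V_A,\ \tau \text{ pure QCM}\}$ (the function $S$ is clearly continuous in $V$, so the proposition applies). The first step is to simplify $S(\tau)$ for pure $\tau$. Since any pure QCM can be written as $\tau = S S^\intercal$ with $S$ symplectic, and singular values of symplectic matrices come in reciprocal pairs, the eigenvalues of $\tau$ pair up as $\{\sigma_i^{2},\sigma_i^{-2}\}$; in particular $\lambda_{\min}(\tau)\lambda_{\max}(\tau)=1$, and $\det\tau=1$ forces $\lambda_{\max}(\tau)\geq 1$, so $S(\tau)=\max\{1,\lambda_{\min}(\tau)^{-1}\}=\lambda_{\max}(\tau)$. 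The theorem is then equivalent to $\sup\{\lambda_{\max}(\tau):\tau\leq V_A,\tau \text{ pure}\}=\lambda_{\max}(V_A)$. The upper bound ``$\leq$'' is immediate from Loewner monotonicity of $\lambda_{\max}$.

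For the lower bound, I would construct an explicit pure $\tau\leq V_A$ saturating the bound via Williamson's normal form. Write $V_A=SDS^\intercal$ with $S\in\mathrm{Sp}(2n,\R)$ and $D=\mathrm{diag}(\nu_1,\ldots,\nu_n,\nu_1,\ldots,\nu_n)$, $\nu_i\geq 1$. Invertibility of $S$ gives the bijection $\tau\leq V_A \Longleftrightarrow \tilde\tau:=S^{-1}\tau S^{-\intercal}\leq D$, with $\tilde\tau$ pure iff $\tau$ is. Let $v$ be a unit top eigenvector of $V_A$, set $w:=S^\intercal v$, and decompose $w=\bigoplus_{i=1}^n w_i$ along the single-mode blocks in which $D$ is diagonal. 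Define
\[
\tilde\tau := \bigoplus_{i=1}^n \tilde\tau_i, \qquad \tilde\tau_i := \nu_i\, \hat w_i \hat w_i^\intercal + \nu_i^{-1}\, \hat w_i^\perp (\hat w_i^\perp)^\intercal
\]
when $w_i\neq 0$ (with $\hat w_i = w_i/\|w_i\|$ and $\hat w_i^\perp$ the unit vector orthogonal to $\hat w_i$ in the single-mode plane), and $\tilde\tau_i:=\id_2$ otherwise. Each $\tilde\tau_i$ has $\det=1$ (hence is a single-mode pure QCM) and its eigenvalues $\{\nu_i,\nu_i^{-1}\}$ give $\tilde\tau_i\leq \nu_i\id_2$; therefore $\tilde\tau$ is globally pure and $\tilde\tau\leq D$. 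The corresponding $\tau:=S\tilde\tau S^\intercal$ is pure, satisfies $\tau\leq V_A$, and
\[
v^\intercal \tau v = w^\intercal \tilde\tau w = \sum_{i=1}^n \nu_i \|w_i\|^2 = w^\intercal D w = v^\intercal V_A v = \lambda_{\max}(V_A),
\]
so $\lambda_{\max}(\tau)\geq \lambda_{\max}(V_A)$, matching the upper bound.

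The main conceptual obstacle is that the eigenvectors of $V_A$ need not align with the symplectic structure, so one cannot squeeze ``directly along'' the top eigendirection: a single-mode squeezed vacuum concentrated on $v$ would in general violate $V_A-\tau\geq 0$ in some transverse symplectic direction. The Williamson change of basis is precisely the workaround, since in that basis the constraint $\tilde\tau\leq D$ decouples across modes and each single-mode block can be independently oriented to match $w_i$ with the maximally allowed squeezing $\nu_i$; summing the mode-wise contributions reassembles exactly $v^\intercal V_A v$. All remaining verifications (single-mode purity, Loewner bound, global purity of the direct sum) are routine and follow from $\nu_i\geq 1$ together with the reciprocal-eigenvalue structure of pure single-mode QCMs established at the outset.
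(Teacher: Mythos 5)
Your proof is correct, and while the first half (showing $S(\tau)=\lambda_{\max}(\tau)$ for pure $\tau$ and getting the upper bound from Weyl/Loewner monotonicity) matches the paper, your lower-bound construction is genuinely different. The paper proves a standalone lemma: for any eigenvector $\ket{x}$ of $V$ with eigenvalue $\lambda$, there exists a pure QCM $\tau\leq V$ having $\ket{x}$ as an \emph{exact} eigenvector with the same eigenvalue; this is established by using transitivity of the orthogonal symplectic group to rotate $\ket{x}$ onto a coordinate axis, peeling off a $2\times 2$ pure block $\mathrm{diag}(\lambda,1/\lambda)$, and recursing on a Schur complement. You instead pass to the Williamson basis, where the constraint $\tilde\tau\leq D$ decouples into single-mode conditions $\tilde\tau_i\leq\nu_i\id_2$, and assemble a direct sum of single-mode squeezed vacua each oriented along the projection $w_i$ of the rotated top eigenvector; you then only certify the quadratic-form value $v^\intercal\tau v=\lambda_{\max}(V_A)$ rather than an exact eigenvector relation, which is all the bound requires. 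Your route is more explicitly constructive and avoids both the induction and the transitivity lemma, at the cost of proving slightly less: the paper's lemma yields a $\tau$ for \emph{every} eigenvector of $V$ (not just the top one), and that stronger statement is reused as the starting point for the construction in the proof of Proposition~\ref{dependence monotone prop}, so it earns its keep elsewhere in the paper. All the verifications you defer (reciprocal pairing of eigenvalues of pure QCMs, $\det\tilde\tau_i=1$ implying single-mode purity, invariance of purity and of the Loewner order under symplectic congruence) are indeed routine and hold as stated.
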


Since the maximal squeezing $S$ is a special case of the general monotone $\kappa$ defined in~\cite{G-resource-theories}, it obeys the so-called tensorization property, i.e., it holds that $S\big(V^{\oplus \ell}\big)\equiv S(V)$ for any number $\ell$ of copies of the state. It does not come as a surprise that the same is true for the maximal Gaussian squeezing of assistance: $S^\G_a\big(V^{\oplus \ell}\big)\equiv S_a^\G(V)$ for all $\ell$. Clearly, the maximal squeezing is highly non-extensive, and is therefore not apt to capture the asymptotic behavior of the state (one would formally obtain $S_a^{\G,\infty}(V)\equiv 0$). 

The above result can be used to obtain the Gaussian squeezing of assistance with respect to other non-classicality measures that are increasing functions of $S$ for pure states. Examples falling into this class include \emph{all} measures of non-classicality for single-mode states. Let us for instance look at the \emph{entanglement potential} introduced as a computable measure for non-classicality of light~\cite{Asboth2005}. This measures the capability of generating two-mode entanglement with a classical state by a balanced beam splitter. Namely, it is defined by 
\bb
 {\rm EP}(\rho)\coloneqq E_N(\sigma_\rho)=\log_2\|\sigma_\rho^{T_A}\|_1
\ee
where $\sigma_\rho\coloneqq U_{\rm BS}(\rho\otimes\ketbra{0})U_{\rm BS}^{\dagger}$, $U_{\rm BS}$ is the $50:50$ beam splitter, $\ketbra{0}$ is the vacuum mode, and $E_N$ is the negativity of entanglement~\cite{negativity}. 
If one takes the relative entropy measure instead of the negativity, the resulting measure is called \textit{entropic entanglement potential}. The (entropic) entanglement potential can be analytically computed for single-mode Gaussian states as well as photon number states~\cite{Asboth2005}. We then obtain the Gaussian entanglement potential of assistance as follows.
\begin{cor}
 Consider the Gaussian resource theory of squeezing over a single-mode system. Let ${\rm EP}$ be the entanglement potential. Then, we get
 \bb
  {\rm EP}_a^{\G}(V_{A}) = \frac{1}{2}\log_2\lambda_{\max}(V_A).
  \label{EP a}
 \ee
\end{cor}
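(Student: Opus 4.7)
The plan is to reduce~\eqref{EP a} to Theorem~\ref{S assistance squeezing thm} by computing $\mathrm{EP}$ explicitly on pure single-mode Gaussian states. First I would invoke~\eqref{assistance pure} to rewrite
\[
 \mathrm{EP}_a^\G(V_A)\;=\;\sup\bigl\{\mathrm{EP}(\tau):\tau\leq V_A,\ \tau\ \text{pure QCM}\bigr\},
\]
so that the task becomes evaluating $\mathrm{EP}$ on the family of pure QCMs dominated by $V_A$ and maximising.

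Next I would establish that $\mathrm{EP}(\tau)=\tfrac{1}{2}\log_2\lambda_{\max}(\tau)$ for every pure single-mode Gaussian QCM $\tau$. Any such $\tau$ is connected to the Williamson normal form $\mathrm{diag}(1/\mu,\mu)$, with $\mu\coloneqq\lambda_{\max}(\tau)\geq 1$, by a single-mode phase rotation; since phase rotations leave the vacuum invariant and a common phase rotation on the two input modes commutes with the balanced beam splitter, this reduction only produces a local unitary on the beam-splitter output and hence preserves the logarithmic negativity. I would then propagate $\mathrm{diag}(1/\mu,\mu)\oplus\id_2$ through the $50{:}50$ beam splitter by a direct $2{\times}2$-block symplectic calculation: the output is a pure two-mode Gaussian state with $\det A=\det B=(\mu+1/\mu+2)/4$ and $\det C=-(\mu+1/\mu-2)/4$, so that Simon's invariant of the partial transpose is $\det A+\det B-2\det C=\mu+1/\mu$, and the partial-transpose symplectic eigenvalues are $\widetilde\nu_\pm=\mu^{\pm 1/2}$. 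This gives $\mathrm{EP}(\tau)=-\log_2\widetilde\nu_-=\tfrac12\log_2\mu$, matching the closed-form expression obtained in~\cite{Asboth2005}.

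Finally, since $\mathrm{EP}$ is strictly increasing in $\lambda_{\max}(\tau)$ on pure single-mode QCMs, the supremum above factorises as $\tfrac12\log_2\sup\{\lambda_{\max}(\tau):\tau\leq V_A,\ \tau\ \text{pure}\}$. For a pure single-mode QCM one has $\lambda_{\max}(\tau)=\lambda_{\min}(\tau)^{-1}=S(\tau)$, so by Theorem~\ref{S assistance squeezing thm} this remaining supremum equals $S_a^\G(V_A)=\lambda_{\max}(V_A)$, yielding~\eqref{EP a}. The only nontrivial ingredient is the beam-splitter symplectic calculation; the rest is bookkeeping built on the earlier results, and I do not expect any substantive obstacle.
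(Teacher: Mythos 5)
Your proposal is correct and follows essentially the same route as the paper: reduce via~\eqref{assistance pure}, use that $\mathrm{EP}(\tau)=\tfrac12\log_2\lambda_{\max}(\tau)$ on pure single-mode QCMs together with its monotonicity in $S(\tau)=\lambda_{\max}(\tau)$, and conclude from Theorem~\ref{S assistance squeezing thm}. The only difference is that the paper simply cites~\cite{Asboth2005} for the value of $\mathrm{EP}$ on pure states, whereas you verify it directly with the (correct) beam-splitter and partial-transpose computation.
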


\begin{proof}
For a pure Gaussian state with QCM $V=R_\phi\,{\rm diag}(\lambda^{-1},\lambda)R_\phi^\dagger$, where $\lambda\geq 1$ and $R_\phi$ is some phase shift operator, the entanglement potential takes the form ${\rm EP}(V)=\frac{1}{2}\log_2\lambda$~\cite{Asboth2005}.
Since this is an increasing function of $\lambda = S(V)$, the pure states that achieve the supremum in Eq.\,(\ref{assistance pure}) for the maximal squeezing and for the entanglement potential coincides, which leads to~\eqref{EP a}. 
\end{proof}

Note that a similar argument can be applied to all the Gaussian resource of assistance measures that are monotonically increasing functions of the $S$ quantifier for pure Gaussian states.

We provide an example where Theorem~\ref{S assistance squeezing thm} is particularly useful; it helps us to observe an interesting difference between finite-dimensional systems and the Gaussian regime in terms of resource conversion between coherence and quantum correlation.
In a finite-dimensional system, the cyclic interconversion between coherence and quantum correlation is possible starting from any coherent state~\cite{hu2016extracting,wu2017experimental}.
More specifically, given a coherent state $\rho_A$, it is always possible to first convert it to a state with nonzero quantum discord $\widetilde{\rho}_{AB}$ by attaching the ancillary state $\ket{0}$ and applying an incoherent operation on $\rho_A\otimes\ketbra{0}$, and convert it back to the original state $\rho_A$ by local quantum incoherent operations and classical communication (LQICC). 
A natural question would be whether one can do the same in the Gaussian regime. 
There, squeezing can be considered the concept analogous to coherence in the sense that it enables to create entanglement together with passive optical elements such as beam splitter.
Let us consider the corresponding cyclic interconversion process where we start from some squeezed state, attach an ancillary state and apply the beam splitter, and try to convert it back to the original state by GLFCC.
Let $\rho_A=\rho_\G[V_A,0]$ be a pure single-mode Gaussian state with QCM $V_A={\rm diag}(\lambda, \lambda^{-1})$, with $\lambda>1$. Set $\widetilde{V}_{AB}={\rm BS}(\tau)\left(V_A\oplus \id_B\right) {\rm BS}(\tau)^\intercal$, where the identity matrix is the QCM of the vacuum state and ${\rm BS}(\tau)$ is the symplectic matrix that represents the action of a beam splitter with transmissivity $\tau$. After a straightforward calculation, one obtains that $\widetilde{V}_A=\tau V_A+(1-\tau)\id$.
Using Theorem~\ref{S assistance squeezing thm}, we get that 
\bb
S^\G_a\big(\widetilde{V}_{A}\big) = \lambda_{\max}\big(\widetilde{V}_A\big) = \tau\lambda+1-\tau \leq \lambda = S(V_A)\,
\ee
where the inequality is strict unless $\tau=1$.
Thus, cyclic interconversion is not always possible in the Gaussian regime.

\section{Case study II: entanglement} \label{sec entanglement}

We now look at the Gaussian resource theory of entanglement over a bipartite system $AB$ of $n_A+n_B$ modes. In this case free QCMs are those that correspond to separable states, i.e.~\cite{Werner01, revisited}
\begin{align*}
\mV^\G_E(n_A, n_B) \coloneqq & \left\{ V_{AB}\!:\, \exists\, \gamma_A,\gamma_B\!:\, V_{AB}\!\geq\! \gamma_A\!\oplus\! \gamma_B\!\geq\! i\Omega_{AB} \right\} \\[0.5ex]
=& \left\{ V_{AB}\!:\, \exists\, \gamma_A\!:\, V_{AB} \geq \gamma_A\!\oplus\! i\Omega_B \geq i \Omega_{AB} \right\} .
\end{align*}
Free transformations are usually taken to be local operations and classical communication (LOCC), which in our setting are required to be also Gaussian (GLOCC)~\cite{giedkemode}.
Let us now consider the problem of Gaussian entanglement generation on a bipartite system $AB$ via assistance by a party ($C$ for Charlie) holding a purification of the state. Instead of selecting a specific monotone, we keep the derivation general and consider any function $E(\gamma_{AB})$ defined on the set of pure QCMs $\gamma_{AB}$ and satisfying a few general properties.
\begin{enumerate}[(i)]
    \item Non-negativity and faithfulness, i.e., $E(\gamma_{AB})\geq 0$ for all pure QCMs $\gamma_{AB}$, with equality if and only if $\gamma_{AB}=\gamma_A\oplus \gamma_B$.
    \item Monotonicity under GLOCC.
    \item Additivity on pure QCMs, i.e., $E(\gamma_{AB}\oplus \tau_{A'B'}) = E(\gamma_{AB}) + E(\tau_{A'B'})$ for all pure QCMs $\gamma_{AB}, \tau_{AB}$.
\end{enumerate}
We obtain the following characterization of such functions.

\begin{lemma} \label{characterization E lemma}
A function $E(\gamma_{AB})$ defined on bipartite pure QCMs $\gamma_{AB}$ obeys conditions (i)--(iii) if and only if it is of the form
\bb
E(\gamma_{AB})=\sum_j f(\nu_j)
\label{characterization E}
\ee
for some non-decreasing function $f:[1,\infty)\to [0,\infty)$ such that $f(1)=0$. Here, $\{\nu_j\}_j$ are the local symplectic eigenvalues of $\gamma_{AB}$, i.e., the symplectic eigenvalues of $\gamma_A$.
\end{lemma}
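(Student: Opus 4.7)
The strategy is to reduce the problem to the two-mode squeezed vacuum (TMSV) building blocks of pure bipartite Gaussian states and then read off the structure of $E$ from the three axioms. First, any pure QCM $\gamma_{AB}$ admits a symplectic normal form: there exist local symplectic transformations $S_A, S_B$ such that $(S_A \oplus S_B)\gamma_{AB}(S_A\oplus S_B)^\intercal$ is a direct sum of single mode-pair TMSV QCMs labelled by the local symplectic eigenvalues $\nu_j$ of $\gamma_A$, together with (possibly) vacuum modes on the larger side when $n_A \neq n_B$. Since local symplectic unitaries and their inverses are both GLOCC, property (ii) forces $E$ to be invariant under them; hence $E(\gamma_{AB})$ depends only on the multiset $\{\nu_j\}$.

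Applying additivity (iii) to the normal form then gives $E(\gamma_{AB}) = \sum_j f(\nu_j)$, with $f(\nu)$ defined as the value of $E$ on a TMSV QCM of local symplectic eigenvalue $\nu$. The properties $f \geq 0$ and $f(1) = 0$ follow immediately from (i), since the $\nu = 1$ TMSV is the two-mode vacuum, a product QCM.

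The core point, that $f$ is non-decreasing, is where the main work lies, and I would address it by exhibiting an explicit pure-to-pure GLOCC protocol that deterministically converts a TMSV with parameter $\nu_2$ into one with parameter $\nu_1$ whenever $1 \leq \nu_1 \leq \nu_2$. Alice (with no communication to Bob required) attaches a vacuum ancilla $A'$, mixes her mode with it on a beam splitter of transmissivity $\eta$, heterodynes $A'$ (a rank-one Gaussian measurement, so the unmeasured modes remain in a pure state), and then corrects her remaining mode by a displacement conditioned on the outcome. The resulting conditional pure two-mode state on $AB$ is again a TMSV, whose local symplectic eigenvalue depends continuously on $\eta$ and interpolates between $\nu_2$ at $\eta = 1$ and $1$ at $\eta = 0$; some intermediate value of $\eta$ hits exactly $\nu_1$, and (ii) then yields $f(\nu_2) \geq f(\nu_1)$. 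For the converse implication, conditions (i) and (iii) are immediate from the sum-over-$\nu_j$ expression; the nontrivial piece is (ii) for the candidate $E$, which I would handle by tracking local symplectic eigenvalues under a generic Gaussian local operation on a pure bipartite QCM so as to rule out ``redistribution'' of entanglement across distinct TMSV blocks. That last step is the main obstacle I anticipate, since the forward argument only uses reductions within a single block, whereas ruling out cross-block effects requires a more global analysis of Gaussian pure-state LOCC.
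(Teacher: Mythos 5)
Your reduction to the normal form is exactly the one the paper uses: decompose $\gamma_{AB}$ under local symplectics into a direct sum of two-mode squeezed vacua labelled by the $\nu_j$ (plus vacua), use monotonicity in both directions under local symplectics to conclude $E$ depends only on $\{\nu_j\}$, and then invoke additivity to get the sum form and faithfulness to get $f\geq 0$, $f(1)=0$. Your explicit beam-splitter/heterodyne/displacement protocol for showing $f$ is non-decreasing is sound (the conditional QCM is outcome-independent, the post-measurement state stays pure, and the local symplectic eigenvalue interpolates continuously between $\nu$ and $1$), and it is more constructive than what the paper does at that step.

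The genuine gap is the one you flag yourself: condition (ii) for the candidate $E=\sum_j f(\nu_j)$ in the converse direction, i.e.\ ruling out that a GLOCC protocol ``redistributes'' entanglement across blocks in a way that increases $\sum_j f(\nu_j)$. The paper closes both this and the forward monotonicity of $f$ with a single external input, the pure-state Gaussian LOCC convertibility criterion of Giedke \emph{et al.}\ (cited as~\cite{giedkemode}): a pure QCM with local symplectic eigenvalues $\{\nu_j\}$ can be converted by GLOCC into one with $\{\mu_j\}$ if and only if $\nu_j^\downarrow\geq\mu_j^\downarrow$ for every $j$. Note the contrast with finite-dimensional pure-state LOCC, where Nielsen's theorem involves majorization and genuine mixing of Schmidt coefficients is possible; in the Gaussian pure-state setting the criterion is entrywise domination of the ordered eigenvalues, so no cross-block redistribution can occur, and monotonicity of $\sum_j f(\nu_j)$ for any non-decreasing $f$ is immediate. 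Without citing (or reproving) this classification, your argument establishes only the ``only if'' direction of the lemma; with it, your explicit protocol becomes redundant, though it remains a nice self-contained verification of the forward step.
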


Notable examples of functions $f$ are: (1) $f(\nu)=s_1(\nu)\coloneqq \frac{\nu+1}{2}\log \left( \frac{\nu+1}{2} \right) - \frac{\nu-1}{2} \log\left( \frac{\nu-1}{2} \right)$, in which case $E$ becomes the standard entanglement entropy~\cite{Horodecki-review}; (2) $f(\nu)=s_2(\nu)\coloneqq \log \nu$, in which case $E$ becomes the R\'enyi-2 entanglement entropy, whose convex roof enjoys a wealth of properties~\cite{AdessoSerafini, GIE, Lami16, LL-log-det}. Incidentally, both $s_1$ and $s_2$ are not only monotonically increasing but also concave.
Also, observe that all choices of $f$ are anyway equivalent when $n_A=n_B=1$, in the sense that there is a monotonic function connecting any two of them.

Let us first consider the simplest case of a product state over a two-mode state. The corresponding QCM will thus be of the form $V_A\oplus V_B$. Clearly product states do not possess any entanglement, but it is possible to produce nonzero entanglement with the help the party holding the purifying system.  

\begin{prop} \label{entanglement product prop}
Let the two-mode QCM $V_{AB}$ represents a product state of the form $V_{AB} = V_A\oplus V_B$. Denote with $a$ and $b$ the symplectic eigenvalues of $V_A$ and $V_B$, respectively. Then, for a monotone $E$ as in~\eqref{characterization E},
\bb
E_a^{\G}(V_{AB})=f\left(\frac{1+ab}{a+b}\right).
\label{entanglement product}
\ee
\end{prop}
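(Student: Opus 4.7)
The plan is to reduce the problem, via the pure-state formula for the Gaussian resource of assistance, to maximizing the local symplectic eigenvalue of admissible pure two-mode QCMs, and then to establish matching lower and upper bounds equal to $\nu^{*}\coloneqq (1+ab)/(a+b)$. First I would invoke the pure-state formula~\eqref{assistance pure}---now applied with the assisted party ``Alice+Bob'' and the assistant (``Charlie'') holding the purification---to rewrite
\[
E_a^{\G}(V_{AB}) = \sup\{E(\tau_{AB}):\ \tau_{AB}\leq V_A\oplus V_B,\ \tau_{AB}\text{ pure QCM}\}.
\]
Since $\tau_{AB}$ is a $(1{+}1)$-mode pure state, Lemma~\ref{characterization E lemma} yields $E(\tau_{AB})=f(\nu)$, where $\nu\geq 1$ is the unique local symplectic eigenvalue; as $f$ is non-decreasing, the problem reduces to maximizing $\nu$. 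By applying free local symplectics I can furthermore assume without loss of generality that $V_A=a\id_2$ and $V_B=b\id_2$. The lower bound $\nu\geq\nu^{*}$ is then straightforward: the two-mode squeezed vacuum $\tau_{\nu^{*}}$ in standard form is admissible, since the constraint $\tau_{\nu^{*}}\leq a\id\oplus b\id$ decouples into two independent $2\times 2$ positivity conditions (one per quadrature), each reducing to $(a-\nu^{*})(b-\nu^{*})\geq (\nu^{*})^{2}-1$, which holds with equality at $\nu=\nu^{*}$.

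For the upper bound, I would take an arbitrary admissible pure two-mode QCM $\tau_{AB}=\left(\begin{smallmatrix}\tau_A & C\\ C^{\intercal} & \tau_B\end{smallmatrix}\right)$ with local symplectic eigenvalue $\nu$ and exploit the Euler/Schmidt-like decomposition of pure two-mode Gaussian states, $\tau_{AB}=(S_A\oplus S_B)\tau_\nu(S_A\oplus S_B)^{\intercal}$, to read off the identities $\det\tau_A=\det\tau_B=\nu^{2}$ and $(\det C)^{2}=(\nu^{2}-1)^{2}$. Next I would rewrite the operator inequality $a\id\oplus b\id-\tau_{AB}\geq 0$ via the Schur complement as
\[
a\id-\tau_A\;\geq\;C\,(b\id-\tau_B)^{-1}\,C^{\intercal}
\]
(handling singularity of $b\id-\tau_B$ by continuity), and take determinants of these $2\times 2$ PSD matrices, using monotonicity of $\det$ on PSD-ordered pairs, to obtain $\det(a\id-\tau_A)\det(b\id-\tau_B)\geq(\nu^{2}-1)^{2}$. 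Finally, AM--GM on the eigenvalues of $\tau_A$ yields $\mathrm{tr}(\tau_A)\geq 2\sqrt{\det\tau_A}=2\nu$, hence $\det(a\id-\tau_A)\leq(a-\nu)^{2}$, and likewise for $\tau_B$; taking positive square roots (using $\nu\leq\min(a,b)$) and rearranging delivers the matching bound $\nu\leq(1+ab)/(a+b)$.

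The main obstacle is the upper bound: one must distill a $4\times 4$ matrix inequality into the scalar bound $\nu\leq\nu^{*}$, which requires the purity identity $(\det C)^{2}=(\nu^{2}-1)^{2}$ (itself a consequence of the Euler/Schmidt-like form of pure two-mode Gaussians) to cooperate with the Schur-complement--determinant step and with two applications of AM--GM. A minor technical nuisance is the degenerate boundary where the Schur-complement denominator is singular; I would handle it by perturbing $b\id-\tau_B\to b\id-\tau_B+\varepsilon\id$ and passing to the limit $\varepsilon\to 0^{+}$.
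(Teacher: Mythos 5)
Your proof is correct, and while the overall skeleton matches the paper's (reduce via the pure-state assistance formula and Lemma~\ref{characterization E lemma} to maximizing the single local symplectic eigenvalue $\nu$ of a pure $\tau_{AB}\leq V_A\oplus V_B$; lower bound via the two-mode squeezed vacuum; upper bound via a Schur-complement argument), your upper bound takes a genuinely different and arguably cleaner route. The paper standardizes the \emph{pure state}, writing $\tau=\tau_c$ in two-mode-squeezed-vacuum form, which forces $V_A$ and $V_B$ out of Williamson form; it then has to spend residual orthogonal freedom diagonalizing $V_A$, control the diagonal of $V_B$ via $\det V_B\leq\det\Delta(V_B)$, extract the diagonal entries of the Schur complement using an explicit $2\times 2$ inversion formula, and multiply two scalar inequalities before applying $\lambda+\lambda^{-1}\geq 2$ and $\mu+\mu'\geq 2b$. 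You instead standardize the \emph{target QCM} to $a\id\oplus b\id$ and let $\tau_{AB}$ be arbitrary, trading all of that bookkeeping for three clean ingredients: the purity identities $\det\tau_A=\det\tau_B=\nu^2$, $(\det C)^2=(\nu^2-1)^2$; monotonicity of the determinant on the PSD order applied to the Schur complement, giving $\det(a\id-\tau_A)\det(b\id-\tau_B)\geq(\nu^2-1)^2$; and AM--GM giving $\det(a\id-\tau_A)\leq(a-\nu)^2$. The square-root step is legitimate since $\tau_A\leq a\id$ forces $\nu=\sqrt{\det\tau_A}\leq a$ (and likewise $\nu\leq b$), and your $\varepsilon$-perturbation handles the singular Schur denominator. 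Both proofs land on $(a-\nu)(b-\nu)\geq\nu^2-1$, i.e.\ $\nu\leq(1+ab)/(a+b)$; yours buys a shorter, more symmetric derivation at the cost of invoking the Euler/Schmidt decomposition of pure two-mode Gaussian states, whereas the paper's keeps $\tau$ in an explicit normal form throughout. (Incidentally, your result matches the proposition as stated, $f((1+ab)/(a+b))$ with no factor of $2$; the stray factor $2$ in the paper's appendix preamble is a typo.)
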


Another class of states for which the Gaussian entanglement of assistance can be explicitly calculated is the class of Gaussian least entangled states for given global and local purities (GLEMS)~\cite{Adesso2004glemsPRL, Adesso2004glems}. This class comprises all two-mode states that have at most one symplectic eigenvalue different from unity, i.e., those that admit a single-mode purifying system.

\begin{prop}
Let the two-mode QCM $V_{AB}$ represents a state in the GLEMS class in the standard form $V_{AB} = \left(\begin{smallmatrix} aI & \Delta \\\Delta & bI \end{smallmatrix}\right)$, where $\Delta = {\rm diag}(k_x,k_p)$.
Then, for a monotone $E$ as in~\eqref{characterization E},
\bb
E_a^{\G}(V_{AB}) = f\left(1 + \frac{k_x^2}{ab-k_x^2}\right) .
\label{entanglement GLEMS}
\ee
\label{entanglement GLEMS prop}
\end{prop}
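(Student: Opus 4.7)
The plan is to establish \eqref{entanglement GLEMS} by reducing it to a concrete determinant-maximization problem and then solving it via the single-mode purification that GLEMS states enjoy. Combining \eqref{assistance pure} with Lemma~\ref{characterization E lemma}, and noting that for any pure two-mode QCM $\tau_{AB}$ with $n_A=n_B=1$ one has $E(\tau_{AB})=f(\nu)$ where $\nu=\sqrt{\det\tau_A}$ is the unique local symplectic eigenvalue, the monotonicity of $f$ yields
\[
E_a^{\G}(V_{AB})=f(\nu_{\max}),\qquad \nu_{\max}^2=\sup\{\det\tau_A:\tau_{AB}\leq V_{AB},\ \tau_{AB}\ \text{pure QCM}\}.
\]

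Next I would exploit the defining property of GLEMS, namely that the minimal symplectic eigenvalue of $V_{AB}$ equals $1$, which guarantees the existence of a purification $V_{ABC}$ on a \emph{single} ancillary mode $C$. One may construct $V_{ABC}$ explicitly via Williamson normal form: apply a local symplectic on $AB$ bringing $V_{AB}$ to $\mathrm{diag}(1,\nu_+,1,\nu_+)$ in the ordering $(x_{A'},x_{B'},p_{A'},p_{B'})$, purify the non-trivial direction by a two-mode squeezed vacuum on $B'C$ with local symplectic eigenvalue $\nu_+=\sqrt{(ab-k_x^2)(ab-k_p^2)}$, and undo the initial transformation. After a further rotation on $C$ one may normalize $V_C=\nu_+ I_2$. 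The characterization invoked in the proof surrounding \eqref{assistance pure} (originally from~\cite{LL-log-det}) then ensures that every pure $\tau_{AB}\leq V_{AB}$ arises as $\tau_{AB}=V_{AB}-X(\nu_+ I_2+\gamma_C)^{-1}X^\intercal$ for a pure single-mode seed $\gamma_C$ and the $AB$--$C$ cross-block $X$ of $V_{ABC}$; projecting onto Alice's sector gives $\tau_A=aI-X_A(\nu_+ I_2+\gamma_C)^{-1}X_A^\intercal$, so the task is to maximize $\det\tau_A$ over all pure one-mode seeds $\gamma_C$.

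The main obstacle is to carry out this maximization cleanly. Parametrizing pure single-mode seeds as $\gamma_C=R(\theta)\,\mathrm{diag}(r,r^{-1})\,R(\theta)^\intercal$ with $r>0$ and $\theta\in[0,\pi)$ renders $\det\tau_A$ an explicit rational function of $(r,\theta)$. Without loss of generality one may assume $k_x^2\geq k_p^2$ (otherwise exchange the $x$- and $p$-quadratures, which amounts to swapping the roles of $k_x$ and $k_p$). A stationary-point analysis, equivalently an inspection of the appropriate $r\to\infty$ limit corresponding to a homodyne measurement on $C$, identifies the supremum as $\det\tau_A^{\max}=\bigl(ab/(ab-k_x^2)\bigr)^2$, i.e.\ $\nu_{\max}=ab/(ab-k_x^2)=1+k_x^2/(ab-k_x^2)$. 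Substituting into $f(\nu_{\max})$ yields~\eqref{entanglement GLEMS} and completes the proof.
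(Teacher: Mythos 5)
Your setup is sound and runs parallel to the paper's: reduce to maximizing $\det\tau_A$ over pure $\tau_{AB}\leq V_{AB}$ (legitimate since $E(\tau_{AB})=f(\sqrt{\det\tau_A})$ with $f$ non-decreasing), exploit the single-mode purification that GLEMS admit, and parametrize the measurement by a pure one-mode seed $\gamma_C=R(\theta)\,\mathrm{diag}(r,r^{-1})\,R(\theta)^\intercal$. The value you extract, $\nu_{\max}=1+k_x^2/(ab-k_x^2)$, is also correct. However, there is a genuine gap at the one place where all the work lives: you assert that ``a stationary-point analysis, equivalently an inspection of the appropriate $r\to\infty$ limit'' yields $\det\tau_A^{\max}=\bigl(ab/(ab-k_x^2)\bigr)^2$, but you neither show that the supremum over $r$ is attained in the homodyne limit $r\to\infty$ (rather than at an interior critical point with finite squeezing), nor that among homodyne measurements the correct quadrature direction $\theta$ is the optimal one. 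Those two facts are not self-evident, and asserting the answer to the optimization is not a proof of it.

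The paper closes exactly this gap in two steps. First, it imports the result of Fiur\'a\v{s}ek and Mi\v{s}ta that for three-mode pure states in the standard form the optimal Gaussian measurement on $C$ is always a homodyne measurement onto an $x$- or $p$-quadrature eigenbasis; this is what licenses restricting to the $x$--$p$ separate form $\tau=X\oplus X^{-1}$ and hence to the Wolf-type optimization $\max\det\tau_A(X)$ subject to $C_p^{-1}\leq X\leq C_x$, whose optimum is shown to saturate both matrix inequalities and therefore to lie on the intersection of two cones parametrized by a single angle $\theta$. Second, it writes out the resulting explicit function $m(\theta)$, observes that for GLEMS the $\sin\theta$ term drops out, finds the stationary points $\theta=0,\pi,\pm\theta^*$, and verifies by direct computation that $m(0)\geq m(\pi)$ and $m(0)\geq m(\pm\theta^*)$, so that $\theta=0$ is the global maximum. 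To complete your argument you would need to supply equivalents of both steps: a justification that the boundary (infinitely squeezed) seeds are optimal, and an explicit comparison of the candidate extrema over $\theta$ showing that the $x$-quadrature homodyne (under the convention $k_x^2\geq k_p^2$) wins.
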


Let us remark once more that \eqref{entanglement product} and \eqref{entanglement GLEMS} represent the maximum average entanglement, as measured by a monotone $E$ obeying requirements (i)--(iii) above, that can be obtained by Alice and Bob with the assistance of Charlie. The setting is the usual: the tripartite state is pure Gaussian, and Charlie can perform a local Gaussian measurement followed by communication of the outcome. While until now we worked in the single-copy setting, we now look at the asymptotic version of the same task, where the three parties possess many independent copies of a fixed state.
The relevant measure is then obtained by taking the regularization of $E_a^\G$, according to the prescription in~\eqref{regularized assistance}:
\bb
E_a^{\G,\infty}(V_{AB}) \coloneqq \lim_{\ell\to\infty} \frac{1}{\ell} E_a^\G\big( V_{AB}^{\oplus \ell}\big) .
\label{regularized Ea}
\ee
As discussed above, it follows from general principles that $E_a^{\G, \infty}(V_{AB})\geq E_a^\G(V_{AB})$ for all bipartite QCMs $V_{AB}$. Unfortunately, we are not yet able to decide whether there can be strict inequality there for some interesting monotone $E$, although we suspect this can happen in general. Note that we do not have a formula to evaluate the limit in~\eqref{regularized Ea}, not even for the simplest (nontrivial) choices of $f$. One should compare this situation with that of standard (non-Gaussian) quantum information: in this case it is known that the regularized entanglement of assistance always equals the minimal local entropy~\cite[Theorem~1]{Smolin2005}; moreover, in~\cite[Example~4]{Smolin2005} an explicit state for which the entanglement of assistance is strictly superadditive is exhibited.

The Gaussian constraints make it difficult to generalize the approach of~\cite{Smolin2005} to the present case. We are however able to compute~\eqref{regularized Ea} in the simplest case of all, i.e., when $V_{AB}=\kappa\id$ is a multiple of the identity -- equivalently, when the state held by Alice and Bob is a product of thermal states with the same mean photon number -- and $f$ is any concave function.  
A by-product of the calculation is a (partially) additive upper bound that holds for \emph{all} QCMs, and that is of interest on its own.

\begin{thm} \label{Ea id thm}
Let the function $E$ be as in~\eqref{characterization E} for some concave $f$. Then for all bipartite QCMs $V_{AB}\geq i\Omega_{AB}$ it holds that
\bb
E_a^{\G}(V_{AB}) \leq E_a^{\G,\infty}(V_{AB}) \leq n\, f\left( \frac{\|V_{AB}\|_\infty^2+1}{2 \|V_{AB}\|_\infty} \right) ,
\label{upper bound Ea}
\ee
where $\|\cdot\|_\infty$ stands for the operator norm. In particular, for all $k\geq 1$
\bb
E_a^{\G}(k \id_{AB}) = E_a^{\G,\infty}(k \id_{AB}) = n\, f\left( \frac{k^2+1}{2k} \right) ,
\label{Ea id}
\ee
where $n=\min\{n_A, n_B\}$ and $n_A$, $n_B$ are the number of modes on the systems. 
\end{thm}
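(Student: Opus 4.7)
My plan is to start from the variational formula given in~\eqref{assistance pure}:
\[
E_a^\G(V_{AB})=\sup\bigl\{E(\tau_{AB}):\tau_{AB}\leq V_{AB},\,\tau_{AB}\text{ pure}\bigr\},
\]
where $E(\tau_{AB})=\sum_j f(\nu_j(\tau_A))$ by Lemma~\ref{characterization E lemma}. The entire theorem is then governed by a single \emph{sublemma}: every pure bipartite Gaussian QCM $\tau_{AB}$ satisfies
\[
\nu_{\max}(\tau_A)\;\leq\;g(\|\tau_{AB}\|_\infty),\qquad g(k)\coloneqq\frac{k^2+1}{2k},
\]
with equality attained by two-mode squeezed vacuum states. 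Granted the sublemma, the upper bound follows quickly. From $\tau\leq V$ one gets $\|\tau\|_\infty\leq \|V\|_\infty$, hence $\nu_j(\tau_A)\leq g(\|V\|_\infty)$ for each $j$ by monotonicity of $g$ on $[1,\infty)$. At most $n=\min\{n_A,n_B\}$ of the $\nu_j$ can differ from $1$ (the remaining ones belong to vacuum slots in any Schmidt-like normal form and contribute $f(1)=0$), and the non-decreasing character of $f$ then yields $E(\tau)\leq n\,f(g(\|V\|_\infty))$ uniformly in $\tau$. Applying the same argument to $V^{\oplus\ell}$---noting $\|V^{\oplus\ell}\|_\infty=\|V\|_\infty$ and $\min\{\ell n_A,\ell n_B\}=\ell n$---and dividing by $\ell$ delivers the matching regularised upper bound, while the inequality $E_a^\G(V_{AB})\leq E_a^{\G,\infty}(V_{AB})$ was noted at the end of Section~\ref{subsec assisted}.

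To prove the sublemma, I would exploit the Schmidt-like decomposition of pure bipartite Gaussian states: there exist local symplectics $L_A,L_B$ such that $(L_A\oplus L_B)\tau_{AB}(L_A\oplus L_B)^\intercal=\tau_{\mathrm{std}}\coloneqq \bigoplus_{j=1}^{\min\{n_A,n_B\}}\tau_{\mathrm{TMSV}}(r_j)\oplus \id_{|n_A-n_B|}$, so that $\nu_j=\cosh 2r_j$ and $\|\tau_{\mathrm{std}}\|_\infty=e^{2r_{\max}}$. Since local symplectic eigenvalues are invariant under the conjugation, $\nu_{\max}(\tau_A)=\cosh 2r_{\max}=g(\|\tau_{\mathrm{std}}\|_\infty)$, and the sublemma reduces to the claim $\|\tau_{\mathrm{std}}\|_\infty\leq \|\tau_{AB}\|_\infty$---in words, that the Schmidt standard form minimises the operator norm within the local symplectic orbit. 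By Bloch--Messiah, $L_A$ and $L_B$ factor into passive (orthogonal symplectic) pieces---which conjugate orthogonally and preserve the spectrum---and diagonal squeezings, so it suffices to treat the diagonal case. Restricting to the two-mode sub-block carrying the largest Schmidt squeezing $r_{\max}$ and writing the squeeze parameters as $u,v>0$ on the $A$ and $B$ sides, one finds, using that the determinants of the two $2\times 2$ subblocks are $uv$ and $1/(uv)$ and their traces are $(u+v)c$ and $(u^{-1}+v^{-1})c$ with $c=\cosh 2r_{\max}$, that the largest eigenvalues of the $x$- and $p$-subblocks of the transformed $\tau_{\mathrm{TMSV}}(r_{\max})$ are at least $\tfrac12(u+v)e^{2r_{\max}}$ and $\tfrac12(u^{-1}+v^{-1})e^{2r_{\max}}$ respectively; AM--GM then forces the larger of $(u+v)/2$ and $(u^{-1}+v^{-1})/2$ to exceed $1$, whence $\|L\tau_{\mathrm{std}}L^\intercal\|_\infty\geq e^{2r_{\max}}$. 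I expect the main technical hurdle to lie in the multimode bookkeeping: although the single-pair case above is transparent, one must ensure that the distinguished mode carrying $r_{\max}$ is cleanly isolated even when passive parts induce mixing across the Schmidt modes.

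Finally, the equality case $V=k\id$ follows by exhibiting an explicit saturating state. I take $\tau_\star\coloneqq \bigoplus_{j=1}^n \tau_{\mathrm{TMSV}}(r)\oplus \id_{|n_A-n_B|}$ with $r=\tfrac12\log k$: it is pure, has $\|\tau_\star\|_\infty=e^{2r}=k$ (so $\tau_\star\leq k\id$), and produces $n$ nontrivial local symplectic eigenvalues all equal to $\cosh 2r=g(k)$, yielding $E(\tau_\star)=n\,f(g(k))$. Combined with the previously established upper bound this gives the single-shot equality in~\eqref{Ea id}, and since the same construction with $\ell n$ TMSV pairs saturates the upper bound for $(k\id)^{\oplus\ell}=k\id$ on the enlarged system, the equality persists in the regularised limit. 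I remark that the concavity of $f$ hypothesised in the statement is not strictly necessary for the chain of inequalities sketched above---only monotonicity and $f(1)=0$ are used---but it is the natural condition on an entanglement-like monotone, and it could enter substantively in alternative proof strategies that bound the average $\bar\nu=\tfrac1n\sum_j\nu_j$ via Jensen's inequality.
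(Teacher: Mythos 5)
Your reduction of the whole theorem to the single sublemma $\nu_{\max}(\tau_A)\leq g(\|\tau_{AB}\|_\infty)$, with $g(k)=(k^2+1)/(2k)$ as in your proposal, is an attractive idea, and if that sublemma held your argument would be complete (and would show, as you note, that concavity of $f$ is superfluous). But the sublemma is exactly where the proposal breaks down, and you have located the fracture yourself: after Bloch--Messiah, $L_A^{-1}=K_2ZK_1$ places a passive factor $K_1$ \emph{between} the diagonal squeezer and $\tau_{\mathrm{std}}$; only the outermost orthogonal factor can be discarded as spectrum-preserving, while $(K_{1,A}\oplus K_{1,B})\,\tau_{\mathrm{std}}\,(K_{1,A}\oplus K_{1,B})^\intercal$ is no longer a direct sum of TMSV pairs whenever the $\nu_j$ are distinct. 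Your two-mode computation therefore does not lift to the multimode case, and the claim that the Schmidt standard form minimizes the operator norm over the local symplectic orbit remains unproven. Note moreover that the sublemma is strictly stronger than anything the paper establishes: the paper's machinery (a variational formula for the symplectic extension $F(A)=\sum_i f(\nu_i)$ obtained from majorization and doubly superstochastic matrices, whence monotonicity and concavity of $F$) only controls concave symmetric functions of the \emph{whole} local symplectic spectrum --- for instance it yields $\frac1n\sum_j\nu_j\leq g(\|\tau\|_\infty)$, an average bound, not a bound on $\nu_{\max}$ alone --- so you cannot close the gap by invoking the theorem for a cleverly chosen $f$.

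For contrast, the paper's proof avoids the need for your sublemma entirely: writing $\tau^{-1}=\Omega\tau\Omega^\intercal$ for the pure (hence symplectic) $\tau$, it computes $\lambda_{\max}\bigl(\tfrac{\tau+\Omega\tau\Omega^\intercal}{2}\bigr)=g(\|\tau\|_\infty)$, restricts to the $A$ sub-block, passes from $\lambda_{\max}$ to $\nu_{\max}$, and then uses the \emph{concavity} of $F$ to undo the symmetrization $\tau_A\mapsto\tfrac{\tau_A+\Omega_A\tau_A\Omega_A^\intercal}{2}$ and recover $F(\tau_A)$. Concavity of $f$ is thus load-bearing in the known argument, and your remark that it is dispensable rests entirely on the unproven sublemma. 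The remaining pieces of your proposal --- the counting of at most $n$ nontrivial local symplectic eigenvalues, the behaviour under $V\mapsto V^{\oplus\ell}$, and the saturating direct sum of TMSVs with $r=\tfrac12\log k$ for the equality case --- are correct and essentially coincide with the paper's treatment.
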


Theorem~\ref{Ea id thm} in particular implies that $E_a^{\G}$ is additive on multiples of the identity. Another particularly important lesson that we can learn from this result is highlighted at the end of Section~\ref{sec non-Gaussian}.

\section{Non-Gaussian operations} \label{sec non-Gaussian}

Finally, let us comment on the possibility of increasing the performance of assisted concentration when the aiding party is allowed to make non-Gaussian operations. Consider the resource theory of optical non-classicality defined on a single system (that is equivalent to the theory of squeezing when it is restricted to the Gaussian states)~\cite{Yadin2018}.
Recall that the entanglement potential is the amount of entanglement that the state can create by mixing with the vacuum mode in a $50:50$ beam splitter. 
Suppose $A$ and $B$ hold a pure two-mode state $\ket{\psi}_{AB}$ and $C$ holds the vacuum mode $\ket{0}_C$ as depicted in Figure~\ref{localizable fig}. Then, the entanglement potential of assistance for $A$ aided by $B$ is the amount of entanglement created by the beam splitter between the vacuum mode on $C$ and the state on $A$ after the measurement on $B$. 
On the other hand, it is equivalent to the localizable entanglement~\cite{Verstraete2004, Popp2005, Fiurasek2007, Mista2008} induced on $A$ and $C$ from the tripartite pure state $\ket{\phi}_{ABC}\coloneqq U_{\rm BS}^{AC} \ket{\psi}_{AB}\otimes\ket{0}_C$ by a measurement on $B$, where $U_{\rm{BS}}^{AC}$ is the unitary corresponding to a $50:50$ beam splitter on $AC$.
In~\cite{Fiurasek2007}, it was shown that the photon number counting can induce larger localizable entanglement than the optimal Gaussian measurement when $\ket{\psi}_{AB}$ is chosen as a two-mode squeezed vacuum state. This observation directly leads to the fact that entanglement potential of assistance can be increased when non-Gaussian operations are allowed on the aiding system. 


\begin{figure}
\centering
\includegraphics[page=3]{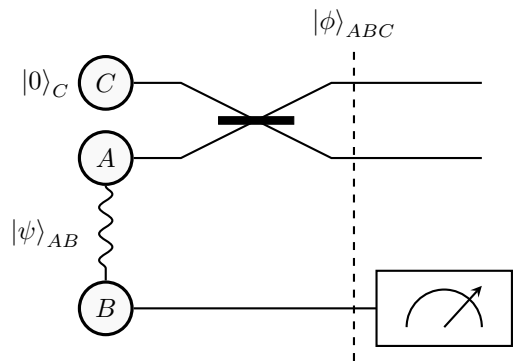}
\caption{Equivalence between the entanglement potential of assistance and the localizable entanglement: if one thinks of the measurement as performed before the beam splitter, then the amount of entanglement between $A$ and $C$ at the output is the entanglement potential of assistance. If instead one sees the measurement as performed after the beam splitter, then the maximum entanglement between $A$ and $C$ is the localizable entanglement of $\ket{\phi}_{ABC}\coloneqq U_{\rm BS}^{AC} \ket{\psi}_{AB}\ket{0}_C$.}
\label{localizable fig}
\end{figure}

Let us now investigate the role of non-Gaussian operations in the assisted preparation of entanglement. We find it particularly instructive to look at the asymptotic case, which can be done for certain states thanks to Theorem~\ref{Ea id thm}. Let $f(\nu)=s_1(\nu)= \frac{\nu+1}{2}\log \left( \frac{\nu+1}{2} \right) - \frac{\nu-1}{2} \log\left( \frac{\nu-1}{2} \right)$ be the function $f$ such that the corresponding $E$ is the standard entanglement entropy. With the help of~\eqref{Ea id} we then see that for all $k>1$ the maximal entanglement that Charlie can induce between Alice and Bob with Gaussian protocols in the asymptotic limit is strictly smaller than the maximal entanglement achievable with \emph{arbitrary} LOCC protocols, which is known to coincide with the minimal local entropy~\cite{Smolin2005}. In formula,
\bb
E_a^{\G,\infty}(k\id_{AB}) = n\, s_1\left( \frac{k^2+1}{2k} \right) < n\, s_1(k) = E_a^\infty (k\id_{AB})\, .
\label{EaG vs Ea}
\ee
Interestingly, the ratio between the r.h.s.\ and the l.h.s.\ of~\eqref{EaG vs Ea} approaches infinity when $k\to 1$, that is, when the local states of Alice and Bob are almost pure. In the opposite limit of $k\to \infty$, i.e., when Alice and Bob both (separately) share a large amount of entanglement with Charlie, the difference between the r.h.s.\ and the l.h.s.\ of~\eqref{EaG vs Ea} tends to a constant $\log 2$, and hence Gaussian protocols perform only slightly worse than general LOCCs.

To the extent of our knowledge, this is one of the few examples of an operational task that can be carried out both in the general and in the Gaussian paradigm, with the corresponding performances being rigorously quantifiable.

\section{Conclusions} \label{sec conclusions}
We considered resource concentration in the collaborative Gaussian setting where one party who only has access to local Gaussian free operations is aided by another party who performs arbitrary local Gaussian operations. We showed that, although two-way classical communications outperform one-way classical communications for the assisted distillation in general, their capability coincide for the Gaussian resource concentration settings considered in this work. 
We in particular analyzed the situation where the aiding party possesses a purification of the aided party, and provided a simplified expression for the general Gaussian resources of assistance. 
We applied this formula to the cases of Gaussian squeezing and two-mode entanglement, and provided analytical solutions for several classes of states and resource measures. 
We finally showed the additivity of the Gaussian entanglement of assistance for product states with respect to a wide class of entanglement measures, providing the analytic formula for the asymptotic scenario. 

Our work illuminates the way of avoiding the well-known {\em no-go} obstacle to using the Gaussian operations -- i.e., impossibility of distilling any Gaussian resources -- by introducing an aiding party that indeed allows the aided party to concentrate the resources. This suggests a further potential for Gaussian operations to be still valuable in various quantum information processing protocols that require Gaussian resource concentration. Our results also provide useful tools for quantitatively assessing the amount of Gaussian resource those protocols can produce.

While we showed the additivity of the Gaussian entanglement of assistance for product states, whether the same holds for other classes of states is not clear, and we leave this for future work. 
Another interesting question to address in the future might be to clarify an operational meaning for the class of entanglement measures considered in this work, which would provide further physical significance of our results. 

\begin{acknowledgements}
We thank Bartosz Regula and Spyros Tserkis for helpful discussions. L.L.\ and G.A.\ acknowledge financial support from the European Research Council (ERC) under the Starting Grant GQCOP (Grant no.~637352). R.T.\ acknowledges the support by NSF, ARO, IARPA, and the Takenaka Scholarship Foundation.
\end{acknowledgements}
\appendix
\widetext

\section{Proof of Theorem~\ref{assistance=collaboration G thm}}
We first need a preliminary lemma that generalizes some of the results obtained in~\cite{nogo3} for the special case of entanglement to the case of general resources. 

\begin{lemma} \label{GLFCC lemma}
Let $\pazocal{O}(A\rightarrow A')$ and $\pazocal{O}(B\rightarrow B')$ be two sets of free Gaussian operations that include all classical noise additions and all displacement unitaries. Given two known bipartite QCMs $V_{AB}$ and $W_{A'B'}$, the transformation $V_{AB}\rightarrow W_{A'B'}$ is possible with general non-deterministic GLFCC if and only if it is possible with a deterministic protocol composed of the following three consecutive stages: (i) local free Gaussian operations; (ii) two-way classical communication; and finally (iii) the application of local displacements.
\end{lemma}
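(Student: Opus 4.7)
The plan is to adapt the argument of~\cite{nogo3}, originally developed for Gaussian entanglement distillation, to the case of a general resource theory satisfying our assumptions. The ``if'' direction is trivial since a three-stage protocol of the stated form is a particular instance of GLFCC, so I would focus on the converse.

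The central observation I would exploit is the defining feature of Gaussian measurements: for any seed, the post-measurement QCM is independent of the classical outcome, which enters only affinely into the mean vector. Therefore, in any multi-round GLFCC protocol, the sequence of QCMs appearing on the two sides is determined solely by the seeds and channels that the two parties choose; the outcomes propagate through the state only as affine corrections to the displacement vectors.

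With this in hand, I would carry out the proof in three steps. \emph{Step 1 (collapse feedforward).} A general non-deterministic GLFCC protocol consists of rounds in which each party, after receiving the most recent classical message, applies a free Gaussian operation, possibly including a measurement whose outcome is broadcast. Since the QCM at every intermediate step does not depend on the actual outcomes, I would replace every outcome by a fixed dummy value (say $0$) whenever it is used to select the next operation. This substitution leaves the QCM trajectory -- and hence $W_{A'B'}$ -- untouched, but removes all classical feedforward; both parties then apply a deterministic, prescribed sequence of local free Gaussian operations, which constitutes stage~(i). \emph{Step 2 (postpone displacements).} The final mean vector is nevertheless an affine function of the \emph{true} measurement outcomes, with coefficients fixed by the now-frozen sequence of seeds and channels. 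I would have Alice and Bob exchange all outcomes by two-way classical communication -- stage~(ii) -- and each compute and apply the required local displacement correction to reproduce the correct mean; this is stage~(iii). \emph{Step 3 (remove post-selection).} Any residual non-determinism from discarded branches affects only the displacement, not the QCM; since classical noise additions belong to $\pazocal{O}(A\to A')$ and $\pazocal{O}(B\to B')$ by hypothesis, averaging the resulting random displacements is itself a free operation and can be absorbed, so the protocol becomes fully deterministic without altering $W_{A'B'}$.

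The main obstacle I anticipate is the bookkeeping of the affine outcome-dependence of the final mean across many rounds of two-sided adaptive measurement, i.e.\ making rigorous the claim that, once outcomes have been frozen to dummy values in Step~1, a single deterministic correction at the end -- computed from all shared outcomes -- suffices to reproduce any mean that the original non-deterministic protocol could have produced. The crucial input is that the freezing in Step~1 does not perturb any QCM along the way, which is precisely the outcome-independence of Gaussian post-measurement QCMs; together with the assumption that displacements and classical noise additions are free, this legitimizes stage~(iii) and the averaging in Step~3 and yields the desired three-stage deterministic implementation.
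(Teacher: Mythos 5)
Your overall strategy is exactly the paper's: exploit the outcome-independence of post-measurement QCMs to freeze the adaptive choices into a fixed deterministic sequence of local operations, then repair the mean vector at the end by exchanging the recorded outcomes and applying local displacements. The ``if'' direction, the role of stages (ii)--(iii), and the observation that only the mean carries the outcome dependence all match the paper's argument.

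There is, however, one step that would fail as written. In Step~1 you justify replacing every outcome by the dummy value $0$ by asserting that ``the QCM at every intermediate step does not depend on the actual outcomes.'' That premise is only true \emph{for a fixed sequence of operations}. In an adaptive protocol the operation applied in round $i$, say $\Lambda_i(t_1,\ldots,t_{i-1})$, is itself a function of the earlier outcomes, and different outcome values select genuinely different Gaussian operations (different seeds $\Gamma$), which in general produce different QCMs. So the QCM trajectory \emph{does} depend on the outcomes through the branching of the protocol, and freezing everything to $0$ reproduces the QCM of the all-zeros branch, which need not equal the target $W_{A'B'}$. The fix is the one the paper uses: the hypothesis that $V_{AB}\to W_{A'B'}$ is achievable means there is a specific branch $(t_1,\ldots,t_N)$ on which $W_{A'B'}$ is obtained, and it is \emph{those} values that must be hard-coded into the operation choices $\Lambda_i(t_1,\ldots,t_{i-1})$. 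Once that substitution is made, outcome-independence of each individual Gaussian operation's output QCM guarantees that the deterministic sequence reproduces $W_{A'B'}$, and your Steps~2 and~3 go through unchanged. (Your Step~3 on averaging residual random displacements via free classical noise addition is a harmless extra; the paper simply matches the mean of the single realized run in stage (iii).)
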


\begin{proof}
One implication is trivial, so we only have to show that any (possibly non-deterministic) GLFCC transformation $V_{AB}\to W_{A'B'}$ can be accomplished by $\text{(i)}\rightarrow\text{(ii)}\rightarrow\text{(iii)}$. At the level of states we will write said transformation as $\rho_{AB}\to \sigma_{A'B'}$, where the QCMs of $\rho_{AB},\sigma_{A'B'}$ are $V_{AB},W_{A'B'}$, respectively (remember that mean vectors are irrelevant as they can always be adjusted via displacement unitaries).
 
Let the GLFCC operation $\Lambda$ that accomplishes the (probabilistic) transformation $\rho_{AB}\to \sigma_{A'B'}$ consist of $N$ rounds of local free Gaussian operations and classical communications. The first round features Alice applying a Gaussian free operation $\Lambda_1$, obtaining a classical measurement outcome modelled by the random variable $T_1$, and communicating it to Bob. In the second round it is Bob instead who applies a Gaussian free operation $\Lambda_2(T_1)$ -- which can depend on Alice's message -- obtains a measurement outcome $T_2$, and sends it to Alice. The protocols proceeds in this way for $N$ rounds, where we can assume $N$ to be even without loss of generality. The operation chosen by either party at the $i$-th round is allowed to depend on all previously exchanged messages. Assume that the state $\sigma_{A'B'}$ is obtained probabilistically conditioned on the sequence of measurement outcomes $(T_1,\ldots, T_N) = (t_1,\ldots, t_N)$. We now describe a protocol to obtain it deterministically via a transformation of the form $\text{(i)}\rightarrow\text{(ii)}\rightarrow\text{(iii)}$.
 
 The protocol is very simple. Step (i) consists of Alice applying operations $\Lambda_1$, then $\Lambda_3(t_1,t_2)$, et cetera, until $\Lambda_{N-1}(t_1,\ldots, t_{N-2})$; at the same time, Bob applies $\Lambda_2(t_1)$, then $\Lambda_4(t_1,t_2,t_3)$, all the way to $\Lambda_{N}(t_1,\ldots, t_{N-1})$. The measurement outcomes obtained are recorded and stored locally by Alice and Bob. Denote with $\omega_{A'B'}$ the state obtained in this way from $\rho_{AB}$, for a specific sequence of measurement outcomes. In step (ii) of the protocol, Alice and Bob reveal to each other the measurement outcomes they obtained. This allows them to reconstruct the mean vector of $\omega_{A'B'}$, which can then be transformed via local displacements in step (iii) so as to match that of $\sigma_{A'B'}$. Call $\widetilde{\omega}_{A'B'}$ the final output state of this protocol. We claim that indeed $\widetilde{\omega}_{A'B'}=\sigma_{A'B'}$.
 
 To see why this is the case, we make use of a crucial property of non-deterministic Gaussian operations: the output QCM does not depend on the measurement outcome that results from the implementation of said operation; in other words, it can be predicted via the rule~\eqref{G operation}. This immediately tells us that the QCM of $\omega_{A'B'}$ (equivalently, the QCM of $\widetilde{\omega}_{A'B'}$) and that of $\sigma_{A'B'}$ are the same. Since also the mean vectors of $\widetilde{\omega}_{A'B'}$ and that of $\sigma_{A'B'}$ coincide, and both states are Gaussian by construction, we see that indeed $\widetilde{\omega}_{A'B'}=\sigma_{A'B'}$, as claimed.

\end{proof}

Now, we are in a position to prove Theorem~\ref{assistance=collaboration G thm}.

\begin{proof}[Proof of Theorem~\ref{assistance=collaboration G thm}]
The intuition behind the proof is very simple. Since apart from adding (possibly correlated) classical noise the best transformations $A$ and $B$ can apply are local, and a local free operation on $A$ will always decrease the monotone $R$, the optimal protocol consists in fact of a measurement on $B$'s side. In what follows we formalize this intuition appropriately.

Lemma~\ref{GLFCC lemma} entails that the transformation $V_{AB}\rightarrow W_{A'B'}$ is possible with GLFCC if and only if there are QCMs $\Gamma_{AA'}$ and $\Gamma_{BB'}$ such that: (i) the Gaussian operations corresponding to $\Gamma_{AA'}$ and $\Gamma_{BB'}$ via~\eqref{G operation} are free; and (ii) $W_{A'B'} \geq \left(\Gamma_{AA'}\oplus \Gamma_{BB'} + \Sigma_{AB} V_{AB} \Sigma_{AB} \right) \big/ \left(\Gamma_{A}\oplus \Gamma_{B} + \Sigma_{AB} V_{AB} \Sigma_{AB} \right)$. From (ii) we deduce that
\begin{align*}
W_{A'} &\geq \Pi_{A'} \left( \left(\Gamma_{AA'}\oplus \Gamma_{BB'} + \Sigma_{AB} V_{AB} \Sigma_{AB} \right) \big/ \left(\Gamma_{A}\oplus \Gamma_{B} + \Sigma_{AB} V_{AB} \Sigma_{AB} \right) \right) \Pi_{A'}^\intercal \\
&= \left(\Gamma_{AA'}\oplus \Gamma_{B} + \Sigma_{AB} V_{AB} \Sigma_{AB} \right) \big/ \left(\Gamma_{A}\oplus \Gamma_{B} + \Sigma_{AB} V_{AB} \Sigma_{AB} \right) \\
&= \left(\Gamma_{AA'} + \Sigma_A V'_{A} \Sigma_A \right) \big/ \left(\Gamma_{A} + \Sigma_A V'_{A} \Sigma_A \right) .
\end{align*}
where $V'_A \coloneqq (V_{AB}+\Sigma_B \Gamma_B \Sigma_B)\big/ (V_B+\Sigma_B \Gamma_B \Sigma_B)$. This implies that the transformation $V_A'\rightarrow W_{A'}$ is possible with a local free operation. Applying the monotonicity properties of $R$, we deduce that $R \left( W_{A'} \right) \leq R\left( V'_A \right)$. This, together with the form of $V_A'$, implies that for any state on $A$ induced by a GLFCC transformation, $B$ can choose a local Gaussian measurement that induces the state on $A$ with larger resource. 
Thus, $R_c^{\G}\leq R_{c,\leftarrow}^{\G}$. Since the converse inequality is trivial, we obtain $R_c^{\G} = R_{c,\leftarrow}^{\G}$.
This is clearly monotone under GLFCC because of the definition of $R_c^{\G}$ and the fact that the concatenation of two successive GLFCC operations is also a GLFCC operation.
\end{proof}

\section{Proof of Theorem~\ref{S assistance squeezing thm} and of Proposition~\ref{dependence monotone prop}}

This appendix is devoted to the proof of the results concerning the Gaussian resource theory of squeezing. We start with some preliminary technical lemmata, then we move on to the proof of Theorem~\ref{S assistance squeezing thm}, and finally we prove Proposition~\ref{dependence monotone prop}.

In what follows, $\braket{\cdot, \cdot}$ denotes the standard Euclidean product on $\R^{2n}$. We start by recalling a well-known fact concerning the group of $2n\times 2n$ orthogonal symplectic matrices, denoted by $\pazocal{K}(n)$.

\begin{lemma}
Let $\ket{x_1}, \ldots, \ket{x_r}\in \R^{2n}$ and $\ket{x'_1}, \ldots, \ket{x'_r} \in\R^{2n}$ be such that:
\begin{enumerate}[(i)]
\item $\braket{v_j | v_k} = \braket{v'_j | v'_k}$ for all $j,k=1,\ldots, r$; and
\item $\braket{v_j | \Omega | v_k} = \braket{v'_j | \Omega | v'_k}$ for all $j,k=1,\ldots, r$.
\end{enumerate}
Then there exists an orthogonal symplectic transformation $K\in\pazocal{K}(n)$ such that $\ket{x'_j}=K \ket{x_j}$ for all $j=1,\ldots, r$. In particular, the action of $\pazocal{K}(n)$ on the Euclidean unit sphere of $\R^{2n}$ is transitive.
\end{lemma}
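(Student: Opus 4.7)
The plan is to exploit the classical identification of $\pazocal{K}(n)$ with the unitary group $\mathrm{U}(n)$ and then invoke the standard Gram-matrix argument for unitary equivalence of tuples of vectors. Since $\pazocal{K}(n)$ is exactly the stabilizer of \emph{both} the Euclidean form and the symplectic form on $\R^{2n}$, and since these are precisely the real and imaginary parts of the Hermitian form under the standard complex structure, conditions (i) and (ii) say precisely that the complex Gram matrices match.

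First, I would introduce the real-linear isomorphism $\phi : \R^{2n} \to \C^n$ defined by $\phi\bigl((x,p)^\intercal\bigr) \coloneqq x+ip$. Using $\Omega = \left(\begin{smallmatrix} 0 & \id \\ -\id & 0 \end{smallmatrix}\right)$, a short check gives, for all $v,w\in\R^{2n}$,
\begin{equation*}
\braket{\phi(v),\phi(w)}_{\C} = \braket{v,w} + i\braket{v|\Omega|w},
\end{equation*}
where the left-hand side is the standard Hermitian inner product on $\C^n$. From this it follows at once that a real-linear map $K$ on $\R^{2n}$ lies in $\pazocal{K}(n) = \mathrm{O}(2n)\cap\mathrm{Sp}(2n,\R)$ if and only if $\phi \circ K \circ \phi^{-1}$ is a unitary on $\C^n$, recovering the familiar isomorphism $\pazocal{K}(n)\cong \mathrm{U}(n)$.

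The identity above shows that hypotheses (i) and (ii) are equivalent to the single statement that the Hermitian Gram matrices $\bigl[\braket{\phi(v_j),\phi(v_k)}_\C\bigr]_{j,k}$ and $\bigl[\braket{\phi(v'_j),\phi(v'_k)}_\C\bigr]_{j,k}$ are equal. The lemma therefore reduces to the standard fact that two $r$-tuples of vectors in $\C^n$ with identical Gram matrices are related by a unitary of $\C^n$. To see this, let $s\le r$ be the common rank, read off from the shared Gram matrix; choose indices $j_1,\ldots,j_s$ such that $\phi(v_{j_1}),\ldots,\phi(v_{j_s})$ are linearly independent, so that the corresponding primed vectors are also linearly independent. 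Apply Gram--Schmidt to both sides; since the coefficients produced by the procedure depend only on the Gram matrix, one obtains orthonormal systems $(e_1,\ldots,e_s)$ and $(e'_1,\ldots,e'_s)$ with the property that $\phi(v_j)$ and $\phi(v'_j)$ are written as the \emph{same} linear combinations of $e_\bullet$ and $e'_\bullet$, respectively. Extend both to orthonormal bases of $\C^n$ and let $U\in\mathrm{U}(n)$ be the unitary sending $e_k\mapsto e'_k$; then $U\phi(v_j)=\phi(v'_j)$ for each $j$, and $K\coloneqq \phi^{-1} U \phi\in\pazocal{K}(n)$ is the required transformation. The second assertion of the lemma is the special case $r=1$ with $\|v_1\|=\|v'_1\|=1$, using that $\braket{v|\Omega|v}=0$ automatically.

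I do not expect a serious obstacle. The only point requiring care is the sign-sensitive computation establishing the identification $\pazocal{K}(n)\cong \mathrm{U}(n)$ via $\phi$; once this is in place, the Gram-matrix argument is entirely routine linear algebra.
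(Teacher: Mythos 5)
Your proposal is correct and follows essentially the same route as the paper's own proof: complexify $\R^{2n}$ to $\C^n$ so that conditions (i) and (ii) become the real and imaginary parts of the Hermitian Gram matrix, then invoke unitary equivalence of tuples with equal Gram matrices. The only difference is that you spell out the Gram--Schmidt argument and the sign check for $\pazocal{K}(n)\cong\mathrm{U}(n)$, both of which the paper leaves implicit.
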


\begin{proof}
It becomes clear once one complexifies everything. Namely, write $\ket{x_j}=\left( \begin{smallmatrix} \ket{x_{jx}} \\ \ket{x_{jp}} \end{smallmatrix} \right)$, and construct $\ket{z_j}\coloneqq \ket{x_{jx}}+i \ket{x_{jp}}$. Do the same for $\ket{x'_j}$, obtaining $\ket{z'_j}$. Remember that orthogonal symplectic transformations at the level of $\ket{x}$ become unitaries at the level of $\ket{z}$. Condition (i) now becomes $\Re \braket{z_j|z_k} = \Re \braket{z'_j|z'_k}$, and (ii) can be cast as $\Im \braket{z_j|z_k} = \Im \braket{z'_j|z'_k}$, so together they imply that $\braket{z_j|z_k} = \braket{z'_j|z'_k}$ as complex numbers. This implies the existence of a unitary $U$ such that $U\ket{z_j}=\ket{z'_j}$ for all $j$, completing the proof.
\end{proof}

\begin{lemma} \label{pure lower bound}
Let $V\geq i\Omega$ be a QCM. Pick an eigenvector $\ket{x}$ of $V$, i.e., $V\ket{x}=\lambda \ket{x}$ for some $\lambda>0$. Then there exists a pure QCM $\tau$ such that:
\begin{enumerate}[(a)]
\item $\tau\leq V$; and
\item $\tau \ket{x}=\lambda \ket{x}$.
\end{enumerate}
\end{lemma}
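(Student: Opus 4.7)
The plan is to reduce, via an orthogonal symplectic rotation, to the case $\ket{x} = e_1$, then peel off the first mode by a Schur-complement argument and apply Williamson's decomposition to the remaining $n-1$ modes.

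Since rescaling $\ket{x}$ does not affect the conclusion, I may take $\|\ket{x}\|=1$. The preceding transitivity lemma then supplies some $K \in \pazocal{K}(n)$ with $K\ket{x}=e_1$, and since conjugation by $K$ preserves both $\Omega$ and the uncertainty bound, after replacing $V$ by $KVK^\intercal$ I may assume $\ket{x}=e_1$. The eigenvalue identity $V e_1 = \lambda e_1$ then pins down the first row and column of $V$, giving $V_{11}=\lambda$ and $V_{1j}=0$ for $j\neq 1$. Reordering coordinates so that the first mode (positions $1$ and $n+1$) is isolated, $\Omega$ becomes block-diagonal $\Omega = \Omega_1\oplus \Omega_{n-1}$, while $V$ decomposes into a mode-$1$ block $\operatorname{diag}(\lambda,\mu)$ with $\mu\coloneqq V_{n+1,n+1}$, a residual block $V_{\mathrm{rest}}$, and a single correlation row $w^\intercal$ linking $p_1$ to the other modes.

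Applying the Schur complement to $V-i\Omega\geq 0$ with respect to the Hermitian mode-$1$ block then yields simultaneously the Robertson-type bound $\mu\geq 1/\lambda$ and the fact that the reduced matrix
\[
V'_{\mathrm{rest}} \coloneqq V_{\mathrm{rest}} - \tfrac{\lambda}{\lambda\mu-1}\, w w^\intercal
\]
is itself a bona fide QCM on the remaining $n-1$ modes. Williamson's theorem then furnishes a pure QCM $\tau_{\mathrm{rest}}\leq V'_{\mathrm{rest}}$, and I set
\[
\tau \coloneqq \operatorname{diag}(\lambda,\lambda^{-1})\oplus \tau_{\mathrm{rest}}
\]
in the mode-block ordering. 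This $\tau$ is manifestly pure (the mode-$1$ block has determinant $1$), satisfies $\tau e_1 = \lambda e_1$ by construction, and reversing the Schur-complement computation yields $\tau \leq V$.

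The main obstacle I anticipate is the degenerate boundary case $\lambda\mu = 1$, where the mode-$1$ block of $V-i\Omega$ is singular and the Schur complement is not literally defined. This is dispatched by the standard fact that a positive semidefinite Hermitian block matrix has its off-diagonal blocks in the range of the diagonal blocks; applied to the kernel vector $(1,-i\lambda)^\intercal$ of the mode-$1$ block, it forces $w=0$, after which $V$ is already block-diagonal in the mode split and the construction goes through with $V'_{\mathrm{rest}}=V_{\mathrm{rest}}$. The remaining work is a routine verification that the Schur-complement output really has the correct symplectic structure, which is automatic given the block decomposition $\Omega = \Omega_1 \oplus \Omega_{n-1}$.
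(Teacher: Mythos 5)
Your proposal is correct and follows essentially the same route as the paper's proof: rotate by an orthogonal symplectic so the eigenvector becomes $e_1$, observe the eigenvalue condition kills the $x_1$ correlations, take the Schur complement of $V-i\Omega$ with respect to the first mode to show $V_{\mathrm{rest}}-\tfrac{1}{\mu-1/\lambda}\,ww^\intercal$ is a QCM, pick a pure QCM below it, and assemble $\tau=\operatorname{diag}(\lambda,\lambda^{-1})\oplus\tau_{\mathrm{rest}}$. Your explicit treatment of the degenerate case $\lambda\mu=1$ via the range condition is slightly more careful than the paper's one-line dismissal, but the argument is the same.
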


\begin{proof}
Let us write the symplectic form as 
\bbb
\Omega = \bigoplus_{j=1}^n \begin{pmatrix} 0 & 1 \\ -1 & 0 \end{pmatrix} .
\eee
Since it is clear that we can rotate $V$ with orthogonal symplectics without changing the problem, we use the transitivity of the group $\pazocal{K}(n)$ to assume that $\ket{x}=\ket{1}$. Then we have
\bbb
V = \begin{pmatrix} \lambda & 0 & 0 \\ 0 & a & s^\intercal \\ 0 & s & V' \end{pmatrix} ,
\eee
where $a\in\R$, $s\in \R^{2(n-1)}$ and $V'$ is an $2(n-1)\times 2(n-1)$ matrix. Now we enforce the constraint $V\geq i\Omega$. First of all, we deduce that $a\geq 1/\lambda$. If there is equality, then the first mode of $V$ is already in a pure state, hence $s=0$ and we are done. Otherwise, we can take the Schur complement and obtain
\bbb
0\leq V'-i\Omega - \begin{pmatrix} 0 & s \end{pmatrix} \begin{pmatrix} \lambda & -i \\ i & a \end{pmatrix}^{-1} \begin{pmatrix} 0 \\ s^\intercal \end{pmatrix} = V' - i \Omega - \frac{1}{a - 1/\lambda}\, ss^\intercal\, .
\eee
We infer that $V' - \frac{1}{a - 1/\lambda}\, ss^\intercal$ is a quantum covariance matrix, and in turn that there exists a pure state $\tau'$ such that $\tau'\leq V' - \frac{1}{a - 1/\lambda}\, ss^\intercal$. We can set
\bbb
\tau \coloneqq \begin{pmatrix} \lambda & 0 \\ 0 & 1/\lambda \end{pmatrix} \oplus \tau'\, ,
\eee
and then verify that $V\geq \tau$ and $\tau \ket{x} = \tau \ket{1} = \lambda \ket{1}$ hold true.
\end{proof}

We are now ready to prove Theorem~\ref{S assistance squeezing thm}.

\begin{proof}[Proof of Theorem~\ref{S assistance squeezing thm}]
Start by observing that every pure QCM $\tau\leq V$ is a symplectic matrix, i.e., $\tau \Omega \tau=\Omega$ or equivalently $\Omega^\intercal \tau \Omega = \tau^{-1}$. An immediate consequence of this is that $\det \tau=1$, which implies that $\lambda_{\min}(\tau)\leq 1$ and hence that $S (\tau) = \lambda_{\min}^{-1}(\tau)$. Continuing, we obtain
\bbb
S (\tau) = \lambda_{\min}(\tau)^{-1} \texteq{(1)} \lambda_{\min}(\Omega^\intercal \tau \Omega)^{-1} \texteq{(2)} \lambda_{\min}\left(\tau^{-1}\right)^{-1} \texteq{(3)} \lambda_{\max}(\tau)\, ,
\eee
where we used in order: (1) the invariance of the eigenvalues under congruence by an orthogonal matrix; (2) the fact that $\tau$ is symplectic; and (3) the elementary observation that the eigenvalues of the inverse of a matrix are the reciprocal of the initial eigenvalues. The above identity allows us to rewrite
\bbb
S_a^{\G}(V) = \sup_{\text{$\tau\leq V$ pure QCM}} \lambda_{\max}(\tau)\, .
\eee
From the above expression it is clear that $S_a^{\G}(V)\leq \lambda_{\max}(V)$. In fact, the so-called Weyl's monotonicity principle states that $A\geq B\geq 0$ implies that the ordered eigenvalues satisfy $\lambda_i^\downarrow(A)\geq \lambda_i^\downarrow(B)$.

Conversely, pick a vector $\ket{x}$ such that $V\ket{x}=\lambda_{\max}(V)\ket{x}$. Then Lemma~\ref{pure lower bound} guarantees the existence of a pure QCM $\tau\leq V$ such that $\tau\ket{x} = \lambda_{\max}(V) \ket{x}$, which in particular implies that $\lambda_{\max}(\tau)=\lambda_{\max}(V)$. We deduce immediately that $S^{\G}_a(V)\geq \lambda_{\max}(V)$, completing the proof.
\end{proof}

We conclude by presenting the proof of Proposition~\ref{dependence monotone prop}. Let us start by a preliminary lemma.

\begin{lemma} \label{spectrum monotone lemma}
In the resource theory of Gaussian squeezing, if $V\rightarrow W$ can be accomplished by means of any set of free operations that preserves the set of free states, then 
\bb
\min\left\{\lambda_i^\uparrow(V),1\right\} \leq \min\left\{\lambda_i^\uparrow(W),1\right\}
\ee
for all $i\geq 1$, where $\lambda_i^\uparrow$ denotes the $i$-th smallest eigenvalue.
\end{lemma}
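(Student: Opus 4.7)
I plan to prove the equivalent distributional statement: for every $t\in(0,1]$, $\#\{i:\lambda_i^\uparrow(W)<t\}\leq\#\{i:\lambda_i^\uparrow(V)<t\}$. The clipped pointwise inequality then follows by the contrapositive, applied at $t=\min\{\lambda_i^\uparrow(V),1\}$: if $\lambda_i^\uparrow(V)\geq t$, i.e.\ $V$ has at most $i-1$ eigenvalues below $t$, then so does $W$, whence $\lambda_i^\uparrow(W)\geq t$. By transitivity one may assume $V\to W$ via a single free Gaussian channel $\Lambda$; by~\eqref{G operation} this channel acts as $\Lambda(V)=\Gamma_B-\Gamma_X^\intercal(\Gamma_A+\Sigma V\Sigma)^{-1}\Gamma_X$ with Choi $\Gamma_{AB}=\bigl(\begin{smallmatrix}\Gamma_A&\Gamma_X\\\Gamma_X^\intercal&\Gamma_B\end{smallmatrix}\bigr)\geq i\Omega$, and the freeness hypothesis becomes $\Lambda(\id)\geq\id$.

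The key preliminary step is to show $\Lambda(t\,\id)\geq t\,\id$ for all $t\in[0,1]$. Invoking the Schur complement characterisation of positive semidefiniteness, this reduces to $\Gamma_{AB}+t\,\mathrm{diag}(\id_A,-\id_B)\geq 0$, which holds because the left-hand side equals the convex combination $(1-t)\,\Gamma_{AB}+t\,[\Gamma_{AB}+\mathrm{diag}(\id_A,-\id_B)]$ of two PSD matrices: the first summand is a QCM, and the second summand is PSD precisely because $\Lambda(\id)\geq\id$.

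Fix now $t\in(0,1]$ and set $k:=\#\{i:\lambda_i^\uparrow(V)<t\}$. Put $V_+:=V+(t\,\id-V)_+$, so that $V_+\geq t\,\id$ and $S:=V_+-V$ is PSD of rank $k$. Matrix monotonicity of $\Lambda$ in the PSD order (immediate from the Schur complement formula, since larger $V$ makes $M:=\Gamma_A+\Sigma V\Sigma$ larger and $M^{-1}$ smaller) combined with the preliminary step gives $\Lambda(V_+)\geq\Lambda(t\,\id)\geq t\,\id$. With $M_+:=M+\Sigma S\Sigma$, the resolvent identity $M^{-1}-M_+^{-1}=M^{-1}(M_+-M)M_+^{-1}$ yields
\[
R:=\Lambda(V_+)-\Lambda(V)=\Gamma_X^\intercal(M^{-1}-M_+^{-1})\Gamma_X,
\]
which is PSD (since $M\leq M_+$ implies $M^{-1}\geq M_+^{-1}$) and of rank at most $\mathrm{rank}\,S=k$. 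Hence $\Lambda(V)-t\,\id=[\Lambda(V_+)-t\,\id]-R\geq -R$, and Weyl's monotonicity principle forces $\Lambda(V)-t\,\id$ to have at most $k$ negative eigenvalues, which is exactly the desired distributional inequality.

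The main obstacle is the preliminary step $\Lambda(t\,\id)\geq t\,\id$: this is where the Gaussian structure and the freeness condition enter non-trivially, via the PSD characterisation of $\Gamma_{AB}+\mathrm{diag}(\id_A,-\id_B)$. A minor technical issue is invertibility of $M$, automatic for $\Gamma_A>0$; deterministic Gaussian channels $V\mapsto XVX^\intercal+N$ (formally the limit $\Gamma_A\to\infty$) can be handled directly by replacing the resolvent manipulation with the simpler bound $\Lambda(V)-t\,\id\geq X(V-t\,\id)X^\intercal$, which follows from freeness $XX^\intercal+N\geq\id$, and applying Sylvester's law of inertia to the congruence by $X$ to obtain the same conclusion.
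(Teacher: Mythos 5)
Your proof is correct, but it takes a genuinely different route from the paper's. Both arguments reduce the claim to counting eigenvalues below a threshold $t\in(0,1]$, both work with the single-channel form~\eqref{G operation}, and both extract from freeness the same condition $\Gamma_{AB}\geq(-\id_A)\oplus\id_B$ (your statement that $\Gamma_{AB}+\mathrm{diag}(\id_A,-\id_B)\geq 0$ is equivalent to $\Lambda(\id)\geq\id$). From there the mechanisms diverge. The paper introduces the variational quantifiers $\kappa_i(V)=\max\{\lambda_i^\uparrow(V)^{-1},1\}$ and uses Haynsworth's law of additivity of inertia to lift the count $N_-(kW-\id)$ from the Schur complement up to the full block matrix $k\Gamma_{AB}+(k\Sigma_A V_A\Sigma_A)\oplus(-\id_B)$, where Weyl monotonicity and the freeness condition can be applied directly. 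You instead prove the intermediate fact $\Lambda(t\,\id)\geq t\,\id$ for all $t\in[0,1]$ by writing $\Gamma_{AB}+t\,\mathrm{diag}(\id_A,-\id_B)$ as a convex combination of two PSD matrices, and then run a low-rank perturbation argument: $V_+ := V+(t\,\id-V)_+$ exceeds $V$ by a PSD matrix of rank $k$, the resolvent identity shows $\Lambda(V_+)-\Lambda(V)$ is PSD of rank at most $k$, and the min-max principle then caps the number of negative eigenvalues of $\Lambda(V)-t\,\id$ at $k$. The paper's route is more compact once the inertia-additivity theorem is granted; yours avoids that machinery entirely, relying only on convexity of the PSD cone, operator anti-monotonicity of the inverse, and a rank bound, and it isolates the reusable statement $\Lambda(t\,\id)\geq t\,\id$. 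Your separate treatment of the deterministic case $V\mapsto XVX^\intercal+N$ via Sylvester's law of inertia is a welcome extra, though strictly speaking unnecessary for channels of the form~\eqref{G operation}, since $\Gamma_{AB}\geq i\Omega$ already forces $\Gamma_A>0$ and hence invertibility of $\Gamma_A+\Sigma_A V_A\Sigma_A$.
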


\begin{proof}
Let $\mathcal{V}_S^\G =\{V:\, V\geq \id\}$ be the set of free QCMs. For a symmetric matrix $X$, let $N_-(X)$ denote the number of eigenvalues of $X$ that are strictly negative. For integers $i\geq 1$, define the quantifiers
\begin{align*}
\kappa_i(V) \coloneqq &\ \min\left\{ k\geq 1:\ \exists\, V'\in \mathcal{V}_S^\G:\ N_-(kV-V')\leq i-1\right\} \\
=&\ \min\left\{ k\geq 1:\ N_-(kV-\id)\leq i-1\right\} \\
=&\ \min\left\{ k\geq 1:\ \lambda^\uparrow_i(kV-\id)\geq 0\right\} \\
=&\ \min\left\{ k\geq 1:\ k\lambda^\uparrow_i(V)\geq 1\right\} \\
=&\ \max\left\{\lambda_i^\uparrow(V)^{-1},\, 1 \right\} .
\end{align*}
We have to show that $V\to W$ under free operations implies that $\kappa_i(V)\geq \kappa_i(W)$. In order to do this, let the free operation $\Lambda_{A\to B}$ accomplishing this transformation have input system $A$ and output system $B$. Its action is described by a QCM $\Gamma_{AB}$ via~\eqref{G operation}, thus we have that 
\bb
W_B=(\Gamma_{AB} + \Sigma_A V_A\Sigma_A) \big/ (\Gamma_A+\Sigma_A V_A \Sigma_A)\, .
\label{W=Lambda(V)}
\ee
If $\Lambda_{AB}$ has to map free QCMs into free QCMs, it must be the case that
\bbb
\Lambda_{A\to B}:\id_A\longmapsto (\Gamma_{AB} + \id_A) \big/ (\Gamma_A+\id_A) \geq \id_B\, .
\eee
By the properties of Schur complements~\cite{ZHANG}, this is equivalent to requiring that 
\bb
\Gamma_{AB}\geq (-\id_A)\oplus \id_B\, .
\label{Schur Gamma}
\ee
Now set $k\coloneqq \kappa_i(V)$, and write
\begin{align*}
    N_-(k W_B - \id_B) &\texteq{1} N_-\left( k (\Gamma_{AB} + \Sigma_A V_A\Sigma_A) \big/ (\Gamma_A+\Sigma_A V_A \Sigma_A) - \id_B \right) \\
    &\texteq{2} N_-\left( k \Gamma_{AB} + (k \Sigma_A V_A\Sigma_A)\oplus (-\id_B) \right) - N_-(\Gamma_A + \Sigma_A V_A \Sigma_A) \\
    &\texteq{3} N_-\left( k \Gamma_{AB} + (k \Sigma_A V_A\Sigma_A)\oplus (-\id_B) \right) \\
    &\textleq{4} N_-\left( \Gamma_{AB} + (k \Sigma_A V_A\Sigma_A)\oplus (-\id_B) \right) \\
    &\textleq{5} N_-\left( (-\id_A)\oplus \id_B + (k \Sigma_A V_A\Sigma_A)\oplus (-\id_B) \right) \\
    &= N_-\left( (-\id_A)\oplus 0_B + (k \Sigma_A V_A\Sigma_A)\oplus 0_B \right) \\
    &\texteq{6} N_-\left( k \Sigma_A V_A \Sigma_A - \id_A\right) \\
    &\texteq{7} N_-\left( kV_A - \id_A\right) \\
    &\textleq{8} i-1\, .
\end{align*}
These steps are justified as follows: 1: Is obtained via~\eqref{W=Lambda(V)}. 2: Is an application of the law of additivity of inertia~\cite[Theorem~1.6]{ZHANG}. 3: Follows because $\Gamma_A + \Sigma_AV_A\Sigma_A\geq 0$, as the l.h.s.\ is a sum of positive semidefinite matrices. 4: Descends from two observations: first, that as $\Gamma_{AB}>0$ and $k\geq 1$ we have that $k\Gamma_{AB}\geq \Gamma_{AB}$; and second, that the eigenvalues are monotonic function over the set of symmetric matrices -- this is Weyl's monotonicity principle. 5:  6: Comes from~\eqref{Schur Gamma}. 5: Same reasoning applied to the inequality~\eqref{Schur Gamma}. 6: Is another more elementary instance of the law of additivity of inertia. 7: Follows because the eigenvalues are invariant under conjugation by a unitary matrix such as $\Sigma_A$. Finally, 8: is by definition of $k=\kappa_i(V)$.
Comparing the above inequalities with the variational representation of $\kappa_i$ presented above, we conclude that $\kappa_i(W)\leq k=\kappa_i(V)$, concluding the proof.
\end{proof}

\begin{proof}[Proof of Proposition~\ref{dependence monotone prop}]
Set
\bbb
V \coloneqq \begin{pmatrix} 1/2 & 0 & 0 & 0 \\ 0 & a & b & 0 \\ 0 & b & a & 0 \\ 0 & 0 & 0 & 2 \end{pmatrix} ,\qquad
\Omega = \begin{pmatrix} 0 & 1 & & \\ -1 & 0 & & \\ & & 0 & 1 \\ & & -1 & 0 \end{pmatrix} ,
\eee
with $a>2$ and
\bbb
b\coloneqq \sqrt{(a-2)\left(a-1/2\right)} .
\eee
It is elementary to see that $\frac54<a-b<2$ for all $a>2$. The minimal symplectic eigenvalue of $V$ satisfies $\nu_{\min}(V)=1$, so that $V\geq i\Omega$ is a valid QCM. As for the ordinary eigenvalues, one has $\lambda_{\max}(V)=a+b$ with eigenvector $\ket{x_{1}}\coloneqq \frac{1}{\sqrt2}\left( \ket{2}+\ket{3}\right)$ where we use the notation $\{\ket{1},\ket{2},\ket{3},\ket{4}\}$ for the basis vectors corresponding to the above matrix representation.

We now show that one can find two pure QCMs $\tau_1, \tau_2\leq V$ such that: (a) $\tau_1$ is maximal with respect to the ordering induced by convertibility via free operations, i.e., for every pure QCM $\tau\leq V$ we have that $\tau\rightarrow \tau_1$ via free operations if and only if $\tau=\tau_1$; but (b) $\tau_1\not\rightarrow \tau_2$ by means of free operations.

To construct $\tau_1$ we can take inspiration from Lemma~\ref{pure lower bound}, which says that we can always fix $\lambda_{\max}(\tau_1)=\lambda_{\max}(V)=a+b>2$. Since $a+b$ has multiplicity one in the spectrum of $V$, the only way the inequality $\tau_1\leq V$ can be obeyed is if the eigenvectors corresponding to $a+b$ in $V$ and $\tau_1$ coincide, i.e., if $\tau_1\ket{x_{1}} = (a+b) \ket{x_{1}}$. Since $\tau_1$ is symplectic, we obtain also
\bbb
\tau_1 \Omega \ket{x_{1}} = \Omega \tau_1^{-1} \ket{x_{1}} = \frac{1}{a+b}\, \Omega \ket{x_{1}}\, ,
\eee
i.e., $\Omega \ket{x_{1}}$ is another eigenvector of $\tau_1$. Thus
\bbb
V - (a+b) \ket{x_{1}}\!\!\bra{x_{1}} - \frac{1}{a+b}\, \Omega \ket{x_{1}}\!\!\bra{x_{1}} \Omega^\intercal = \begin{pmatrix} \frac12 \left( 1- \frac{1}{a+b} \right) & 0 & 0 & \frac{1}{2(a+b)} \\ 0 & \frac{a-b}{2} & -\frac{a-b}{2} & 0 \\ 0 & -\frac{a-b}{2} & \frac{a-b}{2} & 0 \\ \frac{1}{2(a+b)} & 0 & 0 & 2- \frac12 \frac{1}{a+b} \end{pmatrix} ,
\eee
which is positive by Lemma~\ref{pure lower bound} and also by direct inspection. Now, since $1/(a+b)<1/2\leq 1$, the second largest eigenvalue $\lambda_2^\downarrow(\tau_1)$ of $\tau_1$ satisfies
\begin{align}
\lambda_2^\downarrow (\tau_1) &\leq \lambda_{\max} \left( V - (a+b) \ket{x_{1}}\!\!\bra{x_{1}} - \frac{1}{a+b}\, \Omega \ket{x_{1}}\!\!\bra{x_{1}} \Omega^\intercal \right) \\
&= \max\left\{ \eta_+,\, \eta_-,\, a-b,\, 0 \right\}. \label{lambda_ineq}
\end{align}
where $\eta_{\pm}=\frac14\left(5-4t\pm\sqrt{9+16t^2}\right)$ and $t=a+b$. 
Since $\lim_{a\to 2}a-b = 2$ and $\lim_{a\to 2}\eta_+ = 1+\sqrt{10}/4<2$, and they are continuous with respect to $a$, there exists $\tilde{a}>2$ such that $a-b> \eta_+$ holds for any $a$ with $2<a\leq\tilde{a}$. 
Thus, if we focus on $V$ with $a$ satisfying $2<a\leq\tilde{a}$,~\eqref{lambda_ineq} always equals $a-b$.  
Moreover, the inequality is saturated if and only if $\ket{x_2}\coloneqq\frac{1}{\sqrt2} \left(\ket{2}-\ket{3}\right)$ is an eigenvector of $\tau_1$ with eigenvalue $a-b$, because of the condition $\tau_1\leq V$.
Noting that $\Omega\ket{x_2}$ is also an eigenvector of $\tau_1$ with eigenvalue $1/(a-b)$, we obtain
\begin{align*}
\tau_1 &= (a+b) \ket{x_{1}}\!\!\bra{x_{1}} + \frac{1}{a+b}\, \Omega \ket{x_{1}}\!\!\bra{x_{1}} \Omega^\intercal +  (a-b) \ket{x_{2}}\!\!\bra{x_{2}} + \frac{1}{a-b}\, \Omega \ket{x_{2}}\!\!\bra{x_{2}} \Omega^\intercal \\
&= \begin{pmatrix} \frac{a}{a^2-b^2} & 0 & 0 & \frac{b}{a^2-b^2} \\ 0 & a & b & 0 \\ 0 & b & a & 0 \\ \frac{b}{a^2-b^2} & 0 & 0 & \frac{a}{a^2-b^2} \end{pmatrix} .
\end{align*}
Until now, we have effectively proved that the above matrix is the only one meeting the following three requirements: (i) $\tau_1\leq V$ must be a pure QCM; (ii) $\lambda_{\max} (\tau_1) =a+b$; and (iii) $\lambda_2^\downarrow (\tau_1)$ is maximal among all values compatible with (i) and (ii). We now see that $\tau_1$ must be maximal with respect to the ordering induced by convertibility via free operations. In fact, any other pure QCM $\tau\leq V$ such that $\tau\rightarrow \tau_1$ via free operations will be such that: (i) is met by construction; (ii) is also obeyed, because using Weyl's monotonicity principle and the monotonicity of $\lambda_{\max}$ under free operations we deduce that $\lambda_{\max} (V) \geq \lambda_{\max}(\tau) \geq  \lambda_{\max}(\tau_1) =\lambda_{\max}(V)$ and thus that $\lambda_{\max}(\tau)=\lambda_{\max}(V)$; and finally (iii) is satisfied by an analogous argument. Since also $\tau$ meets these three requirements, it must be that $\tau=\tau_1$ by the above reasoning.

The proof of our claim is complete once we exhibit another pure QCM $\tau_2\leq V$ such that $\tau_1\not\rightarrow\tau_2$ by means of free operations. Our candidate is
\bbb
\tau_2\coloneqq \begin{pmatrix} 1/2 & 0 & 0 & 0 \\ 0 & 2 & 0 & 0 \\ 0 & 0 & 1/2 & 0 \\ 0 & 0 & 0 & 2 \end{pmatrix} .
\eee
Verifying that $\tau_2\leq V$ is entirely elementary. Also, $\tau_2$ is just the direct sum of two pure QCMs on the two local modes corresponding to the first two and last two rows and columns, thus it is itself a pure QCM. Finally, one has $\lambda_2^\downarrow (\tau_2) = 2 > a-b = \lambda_2^\downarrow(\tau_1)$, which shows that $\tau_1\not\rightarrow\tau_2$ with free operations thanks to Lemma~\ref{spectrum monotone lemma}.
\end{proof}

\section{Proof of Lemma~\ref{characterization E lemma} and Propositions~\ref{entanglement product prop} and~\ref{entanglement GLEMS prop}}

\begin{proof}[Proof of Lemma~\ref{characterization E lemma}]
Let $\nu_1,\ldots, \nu_m$ be the nontrivial local symplectic eigenvalues of $\gamma_{AB}$, i.e., those that are strictly larger than $1$. Since $f(1)=0$, we can always restrict the sum in~\eqref{characterization E} to those only. It is a classic fact that $\gamma_{AB}$ is equivalent, up to local symplectic operations, to the direct sum $\bigoplus_{j=1}^m \gamma(\nu_j)$ of the two-mode squeezed vacuum QCMs $\gamma(\nu_j)$, plus additional irrelevant local vacuum states. Here we set
\bb
\gamma(\nu) \coloneqq \begin{pmatrix} \nu \id_2 & \sqrt{\nu^2-1} \sigma_z \\ \sqrt{\nu^2-1} \sigma_z & \nu \id_2 \end{pmatrix} ,
\ee
and $\sigma_z$ is the third Pauli matrix. Since (ii) implies invariance under local symplectic operations, $E$ must be a function of the set $\{\nu_1,\ldots, \nu_m\}$ only. Additivity (iii) entails that $E(\gamma_{AB})=\sum_{j=1}^m f(\nu_j)$ for some function $f:[1,\infty)\to [0,\infty)$. Remember that one can transform a pure QCM with local symplectic eigenvalues $\{\nu_j\}_j$ into another one with local symplectic eigenvalues $\{\mu_j\}_j$ via GLOCC if and only if $\nu^\downarrow_j \geq \mu_j^\downarrow$ for all $j$, where a superscript $\downarrow$ denotes rearrangement in decreasing order. Invoking once more monotonicity under GLOCC (ii), we see that the function $f$ must necessarily be non-decreasing. The converse statement follows immediately along the same lines.
\end{proof}

We now move on to the proof of Proposition~\ref{entanglement product prop}, which states that the Gaussian entanglement of assistance with a monotone $E$ as in~\eqref{characterization E} of a two-mode Gaussian product state with QCM $V_{AB}=V_A\oplus V_B$ evaluates to $E_a^{\G}(V_{AB})=2f\left(\frac{1+ab}{a+b}\right)$. Here, $a\coloneqq \sqrt{\det V_A}$ and $b\coloneqq \sqrt{\det V_B}$ are the symplectic eigenvalues of $V_A$ and $V_B$, respectively.

\begin{proof}[Proof of Proposition~\ref{entanglement product prop}]
We start by lower bounding $E_a^{\G}(V_{AB})$. Assume that $V_A$ and $V_B$ are both in Williamson's form, so that $V_A=a\id_2$ and $V_B=b\id_2$. Take
\bb
\tau = \tau_c = \begin{pmatrix} c\id_2 & \sqrt{c^2-1} \sigma_z \\ \sqrt{c^2-1}\sigma_z & c\id_2 \end{pmatrix}
\label{tau c}
\ee
as a two-mode squeezed vacuum. The constraint $\tau_c\leq (a\id_2)\oplus(b\id_2)$ boils down to $c\leq \frac{ab+1}{a+b}$, showing that $E_a^{\G}(V_{AB}) \geq f\left( \frac{ab+1}{a+b} \right)$.

To prove the converse, consider a state $\tau\leq V_A \oplus V_B$. Since we are free to apply any local symplectic operation, we can assume that $\tau=\tau_c$ has the form in~\eqref{tau c}. In general, this does not mean that either $V_A$ or $V_B$ are in Williamson's form. However, we have still a bit of freedom in choosing $V_A$ and $V_B$, because
\bbb
(O\oplus \sigma_z O\sigma_z)\, \tau_c\, (O \oplus \sigma_z O \sigma_z)^\intercal = \tau_c
\eee
for all $2\times 2$ orthogonal matrices $O$. We can then effect any transformation of the form
\bbb
V_A\mapsto O V_A O^\intercal \, ,\quad V_B\mapsto \sigma_z O \sigma_z V_B \sigma_z O^\intercal \sigma_z\, .
\eee
We use this freedom to diagonalise $V_A$, i.e., we assume
\bbb
V_A = a \begin{pmatrix} \lambda & \\ & \lambda^{-1} \end{pmatrix}
\eee
for some $\lambda>0$. At this point $V_B$ is fixed. Let its diagonal be
\bbb
\Delta(V_B) = \begin{pmatrix} \mu & \\ & \mu' \end{pmatrix} .
\eee
Since $\det V_B \leq \det \Delta(V_B)$, we see immediately that $\mu\mu'\geq b^2$, which in turn implies that
\bb
\mu+\mu'\geq 2b\, .
\label{min harmonic mean product}
\ee

Now, let us take the Schur complement with respect to the first two rows and columns of the positive semidefinite matrix $V_A \oplus V_B - \tau_c\geq 0$. Using the inversion formula
\bbb
(M+x\id)^{-1} = \frac{\Omega (M+x\id)\Omega^\intercal}{\det (M+x\id)} = \frac{\Omega (M+x\id)\Omega^\intercal}{\det M +x^2 + x \Tr M}\, ,
\eee
valid for $2\times 2$ symmetric matrices $M$, we obtain
\bbb
0\leq V_A - c\id_2 - (c^2-1) \frac{\sigma_z \Omega V_B \Omega^\intercal \sigma_z - c\id}{b^2 + c^2 - c \Tr V_B} = V_A - c\id_2 - (c^2-1) \frac{\sigma_x V_B \sigma_x - c\id}{b^2 + c^2 - c \Tr V_B}\, .
\eee
Taking the diagonal part of this inequality yields
\begin{align}
a\lambda - c &\geq \frac{(c^2-1) (\mu' - c)}{b^2 + c^2 - c (\mu+\mu')}\, ,\\
a\lambda^{-1} - c &\geq \frac{(c^2-1) (\mu - c)}{b^2 + c^2 - c (\mu+\mu')}\, .
\end{align}
Multiplying these two inequalities one gets
\begin{align*}
c^2 + a^2 - ac (\lambda+\lambda^{-1}) &\geq \frac{(c^2-1)^2 \left( \mu\mu' +c^2 - c (\mu+\mu')\right)}{\left(b^2 + c^2 - c (\mu+\mu')\right)^2} \\
&\geq \frac{(c^2-1)^2 \left( b^2 +c^2 - c (\mu+\mu')\right)}{\left(b^2 + c^2 - c (\mu+\mu')\right)^2} \\
&= \frac{(c^2-1)^2 }{b^2 + c^2 - c (\mu+\mu')} \, ,
\end{align*}
where the second inequality comes from the fact that $\mu\mu'\geq b^2$. Using the estimates $\lambda+\lambda^{-1}\geq 2$ and $\mu+\mu'\geq 2b$, we arrive at
\begin{align*}
c^2-1 &\leq \sqrt{\left( c^2 + a^2 - ac (\lambda+\lambda^{-1})\right) \left(b^2 + c^2 - c (\mu+\mu') \right) } \\
&\leq \sqrt{\left( c^2 + a^2 - 2ac\right) \left(b^2 + c^2 - 2b c \right) } \\
&= (a-c)(b-c)\, ,
\end{align*}
which leads once again to $c\leq \frac{ab+1}{a+b}$, as claimed.
\end{proof}

We now prove Proposition~\ref{entanglement GLEMS prop} by noticing that a similar optimization technique employed for computing the Gaussian entanglement of formation developed in~\cite{Wolf2004} can be also used for calculating the Gaussian entanglement of assistance for GLEMS. 
\begin{proof}[Proof of Proposition~\ref{entanglement GLEMS prop}]
Recall that a general two-mode QCM can be brought into the standard form
\bbb
\begin{pmatrix}
 a & k_x \\
 k_x & b
\end{pmatrix}
\oplus
\begin{pmatrix}
 a & k_p \\
 k_p & b
\end{pmatrix}
= C_x\oplus C_p
\eee
by local symplectic transformations. Note that we ordered the elements so that the block structure with respect to $x$ quadrature and $p$ quadrature becomes explicit. Since local symplectic transformations do not alter the entanglement, one can assume that $V_{AB}$ is already brought into the standard form without loss of generality. In general, we say that $V$ is in the $x-p$ separate form if it has a block structure with respect to position part and momentum part. Note that the standard form is a special case of $x-p$ separate form. 
We first show that if the pure state that achieves the maximum in~\eqref{assistance pure} takes the $x-p$ separate form, the optimization can be simplified in an analogous way of~\cite{Wolf2004}, and we finally show that it is the case for GLEMS. 
It is shown in~\cite{Wolf2004} that $n$-mode pure QCM has the specific form; $\begin{pmatrix} 
 X & XY \\
 YX & YXY + X^{-1}
\end{pmatrix}$ where $X>0$ and $Y$ are $n\times n$ real symmetric matrix. 
Thus, a pure QCM is in the $x-p$ separate form if and only if it has the form $\tau(X)\coloneqq X\oplus X^{-1}$.
Let $V_{AB} = C_x\oplus C_p$ and suppose the maximum in~\eqref{assistance pure} is achieved at a pure state in the $x-p$ separate form.
Then, since two-mode entanglement measures are increasing function of the determinant of the QCM for the reduced density matrix, one is to maximize $\det\tau_A(X)$ over $X$ under the constraint 
\ba
C_p^{-1}\leq X \leq C_x.
\label{eq:optimization constraint}
\ea
In fact, the optimal $X$ saturates both inequalities. 
It can be seen by first observing that 
\ba
 \det \tau_A(X)=1+\frac{X_{12}^2}{\det X} = 1+\frac{(X^{-1})_{12}^2}{\det (X^{-1})}. 
 \label{eq:cone det}
\ea
Suppose $X$ satisfies $C_p^{-1}<X\leq C_x$.
Then, one can take another positive matrix $X_\epsilon\coloneqq X-\epsilon I$, $\epsilon>0$, which still satisfies~\eqref{eq:optimization constraint} but has larger entanglement because of~\eqref{eq:cone det}.
Thus, one can increase the entanglement by increasing $\epsilon$ until $C_p^{-1}\leq X_{\epsilon}$ is saturated, i.e., an optimal $X$ always satisfies $\det(X-C_p^{-1})=0$. 
On the other hand, suppose $X$ satisfies $C_p^{-1}\leq X< C_x$, which is equivalent to $C_x^{-1}< X^{-1}\leq C_p$. 
Then, one can always take $X_{\epsilon}\coloneqq \left(X^{-1}-\epsilon I\right)^{-1}>0$, which has larger entanglement because of~\eqref{eq:cone det} while satisfying~\eqref{eq:optimization constraint}. 
By the same argument, $\epsilon$ can be increased until $C_x^{-1}\leq X_{\epsilon}^{-1}$ is saturated, i.e., an optimal $X$ always satisfies $\det(X-C_x)=0$. 
Hence, we conclude that the maximum entanglement is realized when both the inequalities in~\eqref{eq:optimization constraint} are saturated, i.e., $\det(X-C_p^{-1})=\det(X-C_x)=0$.
It implies that the  maximization can be restricted to the intersection of two cones coming out of $C_p^{-1}$ and $C_x$ in the Minkowski space with the coordinate being the coefficients of Pauli expansion and the norm being the determinant. 
Then, it can be expressed as the maximization over $\theta$ for the following function~\cite{Adesso2005};
\begin{align*}
 m(\theta) = & 1 + \left(-k_p - k_p^2 k_x + k_x na b + \sqrt{(a + k_p^2 b - a b^2) (b + a (k_p^2 - a b))}\cos\theta\right)^2\\
 &\times \left[2 (k_p^2 - a b) \left(-2 k_p k_x - a^2 - b^2 + \frac{\left\{2 k_p^3 a b - k_x a b (-2 + a^2 + b^2) + k_p^2 k_x (a^2 + b^2) + k_p (b^2 + a^2 (1 - 2 b^2))\right\} \cos\theta}{\sqrt{(a + k_p^2 b - a b^2) (b + a (k_p^2 - a b))}}\right. \right.\\ 
 & \left. \left. - (a^2 - b^2) \sqrt{1 - \frac{(k_p - k_p^2 k_x + k_x a b)^2}{(a + k_p^2 b - a b^2) (b + a (k_p^2 - a b))}}\sin\theta \right)\right]^{-1}.
\end{align*}

We now consider GLEMS. It is the class of two-mode states that are marginals of three-mode pure states.
Suppose, without loss of generality, the QCM for the given state in GLEMS is represented by the standard form $V_{AB} = \begin{pmatrix}
  a & k_x \\
  k_x & b
 \end{pmatrix}\oplus
 \begin{pmatrix}
  a & k_p \\
  k_p & b
 \end{pmatrix}$
with $k_x\geq |k_p|$.
Then, again without loss of generality, QCM for a three-mode pure state purifying $V_{AB}$ can be written in the standard form 
\bbb
 V_{ABC}=\begin{pmatrix}
  a & k_x & k_x' \\
  k_x & b & k_x'' \\
  k_x' & k_x'' & c
 \end{pmatrix}\oplus
 \begin{pmatrix}
  a & k_p & k_p' \\
  k_p & b & k_p'' \\
  k_p' & k_p'' & c
 \end{pmatrix}
\eee
because any three-mode pure state can be brought into this standard form by local symplectic transformations~\cite{Adesso2006three}. 
Observe that the Gaussian entanglement of assistance for the two-mode state $V_{AB}$ is equivalent to the Gaussian localizable entanglement~\cite{Fiurasek2007} induced from $V_{ABC}$ by making a measurement on $C$.
In~\cite{Fiurasek2007}, it was shown that the optimal Gaussian measurement on $C$ for three-mode pure states in the standard form is always taken to be a homodyne measurement projecting onto the eigenstates of the quadrature $x$ or $p$, which keeps the post-measurement state on $AB$ in $x-p$ separate form. Thus, the optimization trick described above can be applied for GLEMS. 
Note that in the case of GLEMS, $k_x$ and $k_p$ are completely determined by $a$, $b$, and (inverse of) global purity $g\coloneqq (\Tr[\rho_{AB}^2])^{-1}=\sqrt{(ab-k_x^2)(ab-k_p^2)}$ as
\ba
 k_x = \frac{1}{4\sqrt{ab}}\left(\sqrt{\left[(a-b)^2-(g+1)^2\right]\left[(a-b)^2-(g-1)^2\right]}+\sqrt{\left[(a+b)^2-(g+1)^2\right]\left[(a+b)^2-(g-1)^2\right]}\right)
\ea
\ba
 k_p = \frac{1}{4\sqrt{ab}}\left(\sqrt{\left[(a-b)^2-(g+1)^2\right]\left[(a-b)^2-(g-1)^2\right]}-\sqrt{\left[(a+b)^2-(g+1)^2\right]\left[(a+b)^2-(g-1)^2\right]}\right)
\ea
In~\cite{Adesso2005}, it was observed that in this case the term with $\sin\theta$ in $m(\theta)$ vanishes, and it reduces to the simpler form
\ba
 m(\theta)_{\rm GLEMS} = 1 + \frac{(A\cos\theta+B)^2}{2(ab-k_p^2)\left[(g^2-1)\cos\theta + g^2 +1\right]}
\ea
where $A=k_x(ab-k_p^2)+k_p$, $B=k_x(ab-k_p^2)-k_p$.
Straightforward calculation yields that $\partial_\theta m(\theta) = 0$ is realized for $\theta = 0, \pi, \pm \theta^*$ where 
\bbb
\pm\theta^*=\arccos{\left[\frac{3+g^2}{1-g^2}-\frac{2k_p}{k_x(ab-k_p^2)+k_p}\right]}.
\eee
Observe that $ab-k_p^2>0$ due to the positivity of $V_{AB}$, $ab-k_p^2 \geq \sqrt{ab-k_x^2}\sqrt{ab-k_p^2} = g$ because $k_x\geq |k_p|$, and $g\geq 1$. 
One can then obtain 
\bb
 m(0)-m(\pi)=\frac{-k_p^2+\frac{k_x^2(ab-k_p^2)^2}{g^2}}{ab-k_p^2}\geq 0,
\ee
\bb
 m(0)-m(\pm\theta^*)=\frac{\left[abk_x(1+g^2)-k_p(-2g^2+k_xk_p(1+g^2))\right]^2}{(ab-k_p^2)g^2(-1+g^2)^2}\geq 0.
\ee
Thus, $\theta=0$ achieves the global maximum and we obtain for GLEMS
\bb
E_a^{\G}(V_{AB}) = f\left[1 + \frac{k_x^2}{ab-k_x^2}\right]. 
\ee
\end{proof}

\section{Proof of Theorem~\ref{Ea id thm}}

Here we provide a complete proof of Theorem~\ref{Ea id thm}, which establishes an additive upper bound on $E_a^\G$ (and thus on $E_a^{\G,\infty}$) for all monotones $E$ that derive from a concave function $f$ via~\eqref{characterization E}. Before we delve into the proof, let us fix some notation and establish some preliminary results. Any function $f:\R\to \R$ can be extended to $N\times N$ real symmetric matrices via spectral calculus, i.e., by setting $f(A)\coloneqq \sum_i f(\lambda_i)$, where $\lambda_1,\ldots, \lambda_N$ are the eigenvalues of $A$. If $N=2n$ is even and $A>0$ is strictly positive definite, it is also possible to consider the \emph{symplectic extension}
\bb
F(A)\coloneqq \sum_{i=1}^n f(\nu_i)\, ,
\label{symplectic extension}
\ee
where $\nu_1,\ldots, \nu_n$ are the symplectic eigenvalues of $A$. The following lemma, whose proof is inspired by the techniques in~\cite{bhatia15}, presents a remarkable variational formula for $F$ in the relevant case when $f$ is concave and monotonically non-decreasing.

\begin{lemma} \label{F variational lemma}
Let $f:\R_+\to \R$ be a concave non-decreasing function, and consider its symplectic extension $F$ as defined by~\eqref{symplectic extension}. Then one has that
\bb
F(A) = \min_{M\in \mathrm{Sp}(2n)} f\left( \Delta_s\left( M A M^\intercal\right) \right) ,
\label{F variational}
\ee
where $\mathrm{Sp}(2n)$ denotes the symplectic group, and $\Delta_s\left( \left(\begin{smallmatrix} X & Z \\ Z^\intercal & P \end{smallmatrix}\right) \right)\coloneqq \frac12 \sum_{i=1}^n \left( X_{ii}+P_{ii}\right) \ketbra{i}$.
\end{lemma}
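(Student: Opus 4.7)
The plan is to prove the two inequalities separately, combining Williamson's normal form with a symplectic analogue of Schur's majorization theorem. For the bound $\min_{M} f(\Delta_s(MAM^\intercal)) \leq F(A)$, I would invoke Williamson's theorem (already recalled in the main text) to obtain a symplectic $S$ with $SAS^\intercal = D \oplus D$, where $D = \mathrm{diag}(\nu_1,\ldots,\nu_n)$ collects the symplectic eigenvalues of $A$. A direct computation gives $\Delta_s(SAS^\intercal) = D$, and therefore $f(\Delta_s(SAS^\intercal)) = \sum_i f(\nu_i) = F(A)$, so the minimum is already attained at $M = S$.

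For the reverse inequality, I fix an arbitrary $M \in \mathrm{Sp}(2n)$ and set $B \coloneqq MAM^\intercal$. Since symplectic congruence preserves the symplectic spectrum, one has $F(A) = F(B)$, and the task reduces to showing $F(B) \leq \sum_{i=1}^n f(d_i)$, where $d_i \coloneqq \tfrac12 (B_{ii}+B_{n+i,n+i})$ are the eigenvalues of $\Delta_s(B)$. Put differently, the claim is that the symplectic diagonal dominates the symplectic eigenvalues in the sense suited to concave non-decreasing test functions, that is, a symplectic counterpart of the classical Schur inequality for ordinary spectra of Hermitian matrices.

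Establishing this symplectic Schur-type comparison is the main technical hurdle, and it is precisely here that the concavity and monotonicity of $f$ enter. Following the strategy of~\cite{bhatia15}, I would first prove the log-majorization estimate $\prod_{j=1}^k \nu_j^\downarrow(B) \leq \prod_{i=1}^k d_i^\downarrow(B)$ for every $k = 1,\ldots,n$, using the standard characterisation of the symplectic eigenvalues as the positive eigenvalues of $i\Omega B$ together with Hadamard/Fischer-type determinantal inequalities. The endpoint $k = n$ is already transparent: $\prod_j \nu_j(B) = \sqrt{\det B}$, while AM-GM gives $d_i \geq \sqrt{B_{ii} B_{n+i,n+i}}$ and Hadamard's inequality yields $\det B \leq \prod_{i=1}^{2n} B_{ii}$, so $\prod_j \nu_j(B) \leq \prod_i d_i$. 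For general $k$, the same principle has to be applied to appropriate compound matrices, or alternatively one can exploit a minimax characterisation of the symplectic eigenvalues. Once log-majorization is in hand, a classical Hardy-Littlewood-Polya argument converts it, for concave non-decreasing $f$, into the scalar bound $\sum_j f(\nu_j(B)) \leq \sum_i f(d_i)$, which is exactly $F(B) \leq f(\Delta_s(B))$ and completes the proof. I expect the delicate step to be the passage to general $k$ in the log-majorization, where a direct determinantal inequality is not immediately at hand and some compound-matrix or variational manoeuvre will be needed to propagate the $k=n$ case to intermediate values.
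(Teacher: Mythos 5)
Your first half (attaining the minimum at the Williamson symplectic matrix, so that $\Delta_s(SAS^\intercal)=D$ and the value is $F(A)$) is exactly the paper's argument. The reverse inequality, however, contains a genuine gap in the conversion step. The multiplicative relation you propose, $\prod_{j=1}^k \nu_j^\downarrow(B) \leq \prod_{i=1}^k d_i^\downarrow$, is a weak \emph{log-submajorization} statement controlling the \emph{largest} entries; such a statement (even after converting products to sums) yields $\sum_j \phi(\nu_j)\leq\sum_i \phi(d_i)$ only for \emph{convex} non-decreasing $\phi$, or for $\phi$ with $\phi\circ\exp$ convex non-decreasing. It does not deliver the bound for all concave non-decreasing $f$, and no Hardy--Littlewood--P\'olya argument bridges that: the class of concave non-decreasing test functions is characterized by the \emph{additive weak supermajorization} $d\prec^w\nu$, i.e., control of the partial sums of the \emph{smallest} entries. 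Concretely, $\nu=(2,2)$ and $d=(4,1)$ satisfy $\prod_{j\leq k}\nu_j^\downarrow\leq\prod_{j\leq k}d_j^\downarrow$ for $k=1,2$, yet for the concave non-decreasing function $f(x)=1-1/x$ one finds $\sum_j f(\nu_j)=1>3/4=\sum_i f(d_i)$, so your claimed implication fails. (Separately, the log-majorization itself for intermediate $k$ is left as a sketch via compound matrices, but even granting it the argument would not close.)

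The paper's proof sidesteps this by establishing the additive supermajorization directly: reducing WLOG to $A$ in Williamson's form and writing $M=\left(\begin{smallmatrix} P & Q\\ R & S\end{smallmatrix}\right)$, one computes $\mathrm{diag}\left(\Delta_s(MAM^\intercal)\right)=\widetilde{M}\nu$ with $\widetilde{M}_{ij}=\frac12\left(P_{ij}^2+Q_{ij}^2+R_{ij}^2+S_{ij}^2\right)$; by~\cite[Theorem~6]{bhatia15} the matrix $\widetilde{M}$ is doubly superstochastic whenever $M$ is symplectic, hence $\mathrm{diag}\left(\Delta_s(MAM^\intercal)\right)\prec^w\nu$, and~\cite[3.C.1.b]{MARSHALL-OLKIN} applied to $g=-f$ gives $f\left(\Delta_s(MAM^\intercal)\right)\geq\sum_i f(\nu_i)=F(A)$. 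To repair your route you would need to replace the multiplicative estimate by this additive one, e.g., by proving $\sum_{j=1}^k \nu_j^\uparrow \leq \sum_{j=1}^k d_j^\uparrow$ for all $k$.
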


\begin{proof}
This argument is inspired by the techniques introduced by Bhatia and Jain in~\cite{bhatia15}. First of all, take as $M$ the symplectic matrix $M_0$ that brings $A$ in Williamson's form~\cite{willy}. Since $M_0 A M_0^\intercal = \left( \begin{smallmatrix} D & 0 \\ 0 & D \end{smallmatrix} \right)$ for a diagonal $D=\sum_{i=1}^n \nu_i \ketbra{i}$, one has that
\bbb
\inf_{M\in \mathrm{Sp}(2n)} f\left( \Delta_s\left( M A M^\intercal\right) \right) \leq f\left( \Delta_s\left( M_0 A M_0^\intercal\right) \right) = f(D) = F(A)\, .
\eee

It remains to prove the reverse inequality. In what follows, we will employ few concepts from the theory of majorization; for an introduction, we refer the reader to the excellent monograph~\cite[Chap.~1]{MARSHALL-OLKIN}.  Since in the r.h.s.\ of~\eqref{F variational} one is anyway optimising over all symplectic matrices, we can without loss of generality assume that $A$ is in Williamson's form, i.e., that $A=\left( \begin{smallmatrix} D & 0 \\ 0 & D\end{smallmatrix} \right)$ with $D=\sum_{i=1}^n \nu_i \ketbra{i}$. Now, consider a symplectic matrix partitioned as $M=\left( \begin{smallmatrix} P & Q \\ R & S \end{smallmatrix} \right)$. Observe that
\begin{align*}
    \Delta_s\left( M A M^\intercal \right) &= \Delta_s\left(\left( \begin{smallmatrix} PDP^\intercal + QDQ^\intercal & PDR^\intercal QDS^\intercal \\ RDP^\intercal + SDQ^\intercal & RDR^\intercal + SDS^\intercal \end{smallmatrix} \right) \right) \\
    &= \frac12 \sum_{i=1}^n \left( PDP^\intercal + QDQ^\intercal + RDR^\intercal + SDS^\intercal \right)_{ii} \ketbra{i} \\
    &= \frac12 \sum_{i=1}^n \sum_{j=1}^n \left( P_{ij}^2 + Q_{ij}^2 + R_{ij}^2 + S_{ij}^2 \right) \nu_j \ketbra{i} \\
    &= \sum_{i=1}^n \left( \widetilde{M} \nu \right)_i \ketbra{i}\, ,
\end{align*}
where we introduced the $n\times n$ matrix $\widetilde{M}$ defined by
\bbb
\widetilde{M}_{ij} \coloneqq \frac12 \left( P_{ij}^2 + Q_{ij}^2 + R_{ij}^2 + S_{ij}^2 \right)
\eee
as well as the shorthand $\nu\coloneqq (\nu_1,\ldots, \nu_n)^\intercal$ for the column vector of symplectic eigenvalues. We then see that the entries of $\Delta_s(MAM^\intercal)$ are obtained by applying the matrix $\widetilde{M}$ to $\nu$, in formula 
\bbb
\mathrm{diag}\left( \Delta_s (MAM^\intercal)\right) = \widetilde{M}\nu\, .
\eee
It is shown in~\cite[Theorem~6]{bhatia15} that when $M$ is symplectic the matrix $\widetilde{M}$ is doubly superstochastic, which via~\cite[Proposition~2.D.2.b]{MARSHALL-OLKIN} implies that
\bbb
\mathrm{diag}\left( \Delta_s (MAM^\intercal)\right) \prec^w \nu\, ,
\eee
with $\prec^w$ denoting weak supermajorization. Using~\cite[3.C.1.b]{MARSHALL-OLKIN} (with $g=-f$, which is convex and non-increasing), we then see that
\bbb
f\left( \Delta_s (MAM^\intercal)\right) = \sum_{i=1}^n f\left( \Delta_s (MAM^\intercal)_{ii} \right) \geq \sum_{i=1}^n f(\nu_i) = F(A)\, .
\eee
Since $M$ was an arbitrary symplectic matrix, this completes the proof.
\end{proof}

\begin{cor} \label{F concavity cor}
Let $f:\R_+\to \R$ be a concave non-decreasing function. Then its symplectic extension $F$ defined by~\eqref{symplectic extension} is monotonically non-decreasing and concave on the set of (strictly) positive definite matrices.
\end{cor}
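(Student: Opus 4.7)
The plan is to apply the variational formula~\eqref{F variational} established in Lemma~\ref{F variational lemma}, which expresses $F(A)$ as the infimum over the symplectic group $\mathrm{Sp}(2n)$ of the family of functions $A \mapsto f(\Delta_s(MAM^\intercal))$. Both monotonicity and concavity will then follow almost formally from standard properties of such infima, so there is no real obstacle beyond verifying that each member of the family has the required properties in $A$.

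For concavity, I would note that for each fixed symplectic $M$, the assignment $A \mapsto \Delta_s(MAM^\intercal)$ is linear, because congruence by $M$ is linear and $\Delta_s$ is a linear projection onto diagonal matrices. Consequently, each diagonal entry $\bigl(\Delta_s(MAM^\intercal)\bigr)_{ii} = \tfrac12\bigl[(MAM^\intercal)_{ii}+(MAM^\intercal)_{n+i,n+i}\bigr]$ is a linear functional of $A$. Since $f$ is concave on $\R_+$, each composition $A \mapsto f\bigl((\Delta_s(MAM^\intercal))_{ii}\bigr)$ is concave; summing over $i=1,\dots,n$ preserves concavity, so $A \mapsto f(\Delta_s(MAM^\intercal))$ is concave. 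The pointwise infimum of any family of concave functions is concave, hence $F$ is concave on the cone of strictly positive definite matrices.

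For monotonicity, suppose $0 < A \leq B$. For any symplectic $M$, congruence preserves the positive semidefinite order, so $MAM^\intercal \leq MBM^\intercal$; extracting diagonal entries then yields the entrywise inequality $\Delta_s(MAM^\intercal)\leq \Delta_s(MBM^\intercal)$ between diagonal matrices. Because $f$ is non-decreasing on $\R_+$, applying it entrywise preserves this inequality. To conclude, I pick a minimiser $M_B$ of the variational expression for $F(B)$ (the proof of Lemma~\ref{F variational lemma} exhibits Williamson's symplectic explicitly, so attainment is free) and use it as a feasible point for $F(A)$:
\begin{equation*}
F(A) \leq f\bigl(\Delta_s(M_B A M_B^\intercal)\bigr) \leq f\bigl(\Delta_s(M_B B M_B^\intercal)\bigr) = F(B).
\end{equation*}

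The main point requiring a bit of care is not the logic but the existence of an optimiser in~\eqref{F variational}; should attainment be in doubt (e.g.\ at the boundary of the positive definite cone), one can replace $M_B$ by an approximately optimal sequence and pass to the limit using continuity of $f$, without altering the argument. No other subtlety is expected.
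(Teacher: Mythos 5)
Your proposal is correct and follows essentially the same route as the paper: both proofs rest entirely on the variational formula of Lemma~\ref{F variational lemma}, deriving concavity from the concavity of $f$ composed with the linear maps $A\mapsto\Delta_s(MAM^\intercal)$ (the paper instantiates this at the optimal $M$ for the midpoint, you invoke the equivalent fact that an infimum of concave functions is concave), and deriving monotonicity from order-preservation under congruence plus the monotonicity of $f$, which the paper merely asserts as ``clear.'' Your remark on attainment is also consistent with the paper, where the minimum is explicitly achieved by the Williamson symplectic.
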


\begin{proof}
The fact that $F$ is a monotonic function is very clear from~\eqref{F variational}, and is also a consequence of the monotonicity principle for symplectic eigenvalues established in~\cite{giedkemode}. We now move on to the proof of concavity. Let $A,B>0$ be strictly positive definite. Take a symplectic matrix $M$ that achieves the minimum in~\eqref{F variational} for $(A+B)/2$ (e.g., the one that brings $(A+B)/2$ into Williamson's form). Then, using the concavity of $f$ we obtain that
\begin{align*}
    F\left(\frac{A+B}{2}\right) &= f\left( \Delta_s \left( M\, \frac{A+B}{2}\, M^\intercal \right)\right) \\
    &= f\left( \frac12 \Delta_s \left( M A M^\intercal \right) + \frac12 \Delta_s \left( M B M^\intercal \right)\right) \\
    &= \sum_{i=1}^n f\left( \frac12 \Delta_s \left( M A M^\intercal \right)_{ii} + \frac12 \Delta_s \left( M B M^\intercal \right)_{ii} \right) \\
    &\geq \sum_{i=1}^n \left\{ \frac12 f\left( \Delta_s \left( M A M^\intercal \right)_{ii}\right) + \frac12 f\left(\Delta_s \left( M B M^\intercal \right)_{ii} \right)\right\} \\
    &= \frac12 f\left( \Delta_s\left( M A M^\intercal\right) \right) + \frac12 f\left( \Delta_s\left( M B M^\intercal\right) \right) \\
    &\geq \frac12 F(A) +\frac12 F(B)\, , 
\end{align*}
as claimed.
\end{proof}

We are now ready to give a full proof of Theorem~\ref{Ea id thm}.

\begin{proof}[Proof of Theorem~\ref{Ea id thm}]
We start by proving~\eqref{upper bound Ea}. We only have to show that given any QCM $V_{AB}$ and any pure QCM $\tau_{AB}\leq V_{AB}$, one has that
\bb
E(\tau_{AB}) = F(\tau_{A}) \leq n\, f\left( \frac{\|V_{AB}\|_\infty^2+1}{2\|V_{AB}\|_\infty} \right) .
\label{upper bound F}
\ee
In fact, the bound $E_a^\G(V_{AB}) \leq n\, f\left( \frac{\|V_{AB}\|_\infty^2+1}{2\|V_{AB}\|_\infty} \right)$ will then follow by taking the supremum over $\tau_{AB}$. The regularized bound on $E_{a}^{\G,\infty}$ is easily deduced as the r.h.s.\ of~\eqref{upper bound F} does not change when $V_{AB}$ is replaced by $V_{AB}^{\oplus\ell}$.

We now prove the inequality in~\eqref{upper bound F}. We can assume without loss of generality that $n=n_A\leq n_B$. Set $t\coloneqq \lambda_{\max}(\tau) = \|\tau\|_\infty$, so that $t\leq \|V_{AB}\|_\infty$. Here we used the fact that for positive matrices the operator norm coincides with the maximal eigenvalue, denoted with $\lambda_{\max}$. Let us write
\begin{align*}
n\, f\left( \frac{\|V_{AB}\|_\infty^2+1}{2\|V_{AB}\|_\infty} \right) &\textgeq{1} n\, f\left( \frac{t^2+1}{2t} \right) \\
&= n\, f\left( \frac{t+t^{-1}}{2} \right) \\
&\texteq{2} n\, f\left( \lambda_{\max}\left( \frac{\tau+\tau^{-1}}{2} \right)\right) \\
&\texteq{3} n\, f\left( \lambda_{\max}\left( \frac{\tau+\Omega \tau \Omega^\intercal}{2} \right)\right) \\
&\textgeq{4} n\, f\left( \lambda_{\max}\left( \left(\frac{\tau+\Omega \tau \Omega^\intercal}{2}\right)_A \right)\right) \\
&\textgeq{5} n\, f\left( \nu_{\max}\left( \left(\frac{\tau+\Omega \tau \Omega^\intercal}{2}\right)_A \right)\right) \\
&\textgeq{6} F \left( \left(\frac{\tau+\Omega \tau \Omega^\intercal}{2}\right)_A \right) \\
&= F \left( \frac{\tau_A+\Omega_A \tau_A \Omega_A^\intercal}{2} \right) \\
&\textgeq{7} \frac12 F( \tau_A) + \frac12 F(\Omega_A\tau_A \Omega_A^\intercal) \\
&\texteq{8} F(\tau_A)\, .
\end{align*}
The above steps are justified as follows. 1: Descends from the fact that $t\mapsto (t^2 +1)/(2t)$ is an increasing function on $[1,\infty)$, and indeed $t\geq 1$ because $\det\tau = 1$. 2: Since $\tau$ and $\tau^{-1}$ commute, the spectrum of $\tau+\tau^{-1}$ has the form $\{\lambda_{i}+\lambda_i^{-1}\}_i$, with $\{\lambda_i\}_i$ being the spectrum of $\tau$; its maximum is thus achieved for $\lambda_i=\lambda_{\max}(\tau)=t$. 3: Comes from the observation that $\tau$ is a symplectic matrix, hence $\tau \Omega \tau = \Omega$ and therefore $\tau^{-1} = \Omega \tau \Omega^\intercal$. 4: Amounts to noting that the maximal eigenvalues never increases when passing from a matrix to one of its sub-blocks, see e.g.~\cite[Lemma~3.3.1]{HJ2}; remember also that $f$ is non-decreasing. 5: Here $\nu_{\max}$ denotes the maximal symplectic eigenvalue, which is never larger than the maximal eigenvalue by~\cite[Theorem~11]{bhatia15}. 6: Is a direct consequence of~\eqref{symplectic extension}. 7: Is an application of Corollary~\ref{F concavity cor}, which establishes the concavity of $F$. 8: Derives from the elementary observation that $\Omega_A$ itself is a symplectic matrix, hence the symplectic eigenvalues of $\tau_A$ and $\Omega_A \tau_A \Omega_A^\intercal$ are the same. This concludes the proof of~\eqref{upper bound Ea}.

To see why~\eqref{Ea id} holds, start by observing that from~\eqref{upper bound Ea} we trivially deduce that $E_a^{\G,\infty}(k\id_{AB})\leq n f\left( \frac{k^2+1}{2k} \right)$. In order to establish the converse, assume without loss of generality that $n=n_A\leq n_B$, and write formally $k \id_{AB} = \bigoplus_{j=1}^n \left(k\id_{A_j B_j}\right) \oplus \left( k \id_{B_{n+1}\ldots B_{n_B}}\right)$, where $A_j$ is the $j$-th mode on $A$, and analogously for $B$. We then have that
\begin{align*}
    E_a^\G (k\id_{AB}) &= E_a^\G \left(\bigoplus_{j=1}^n \left(k\id_{A_j B_j}\right) \oplus \left( k \id_{B_{n+1}\ldots B_{n_B}}\right) \right) \\
    &\textgeq{9} \sum_{j=1}^n E_a^\G\left( k \id_{A_j B_j} \right)  \\
    &\texteq{10} n f\left( \frac{k^2+1}{2k} \right) .
\end{align*}
Here, step 9 comes from superadditivity of $E_a^\G$, while step 10 is an application of the two-mode formula~\eqref{entanglement product}. This proves~\eqref{Ea id}.
\end{proof}

\bibliographystyle{apsrmp4-2}
\bibliography{biblio}
\end{document}